\providecommand{\abs}[1]{\lvert #1 \rvert}
\providecommand{\norm}[1]{\lVert #1 \rVert}
\providecommand{\trinorm}[1]{\lvert \mspace{-1.5mu}\lvert
\mspace{-1.5mu} \lvert #1 \rvert \mspace{-1.5mu}\rvert\mspace{-1.5mu}\rvert}
\providecommand{\norm}[1]{\lVert #1 \rVert}
\numberwithin{equation}{section}
\newtheorem{theorem}{Theorem}
\newtheorem{proposition}[theorem]{Proposition}
\title{\large{\textbf{SOME APPLICATIONS OF THE LEE-YANG THEOREM}}}
\date{}
\author[*]{J\" urg Fr\" ohlich}
\author[**]{Pierre-Fran\c cois Rodriguez}
\affil[*]{\textit{\small{Institute for Theoretical Physics, ETH Zurich, CH-8093, Zurich, Switzerland}}}
\affil[**]{\textit{\small{Department of Mathematics, ETH Zurich, CH-8092, Zurich, Switzerland}}}
\begin{document}
\maketitle

\vspace{-0.5cm}
\begin{quote}
\begin{center}
\textit{This note is dedicated to Elliott Lieb, mentor and friend, \\ on the occasion of his eightieth birthday.}
\end{center}
\end{quote}

\vspace{0.3cm}

\begin{abstract}
For lattice systems of statistical mechanics satisfying a Lee-Yang property (i.e., for which the Lee-Yang circle theorem holds), we present a simple proof of analyticity of (connected) correlations as functions of an external magnetic field $h$, for $\textrm{Re} \ \! h \neq 0$. A survey of models known to have the Lee-Yang property is given. We conclude by describing various applications of the aforementioned analyticity in $h$. 
\end{abstract}

\vspace{0.2cm}

\section{Introduction}

Approximately sixty years ago, while studying phase transitions in certain types of monatomic gases, Lee and Yang were led to investigate the location of zeroes of the (grand) partition function of such systems as a function of external parameters, in particular the chemical potential, as the thermodynamic limit is approached \cite{LY1}. At first sight, obtaining information on the location of these roots appears to be a formidable task. Yet, for a classical lattice gas with a variable chemical potential equivalent to the Ising model in an external magnetic field, they were able to show \cite{LY2} that the distribution of zeroes exhibits some astonishing regularity: all the roots lie on the imaginary axis (i.e., on the unit circle in the complex activity plane; see Theorem \ref{T1}, below). Their celebrated result was subsequently extended to plenty of further models.

In this note, we offer a short review of the present ``status quo'' regarding this so-called Lee-Yang theorem (Section 2), and then sketch some applications, using some novel arguments. Our main results can be found in Sections \ref{analyticity} and \ref{apps}. Here is a brief summary. For simplicity, we consider models on the lattice $\mathbb{Z}^d$. With each site, $x$, of the lattice we associate a random variable (a ``spin''), $\sigma_{x}$, taking values in a measure space $\Omega \subseteq \mathbb{R}^N$, for some $N \in \mathbb{N}$. The a-priori distribution of this random variable is specified by a probability measure, $\mu_0$, on $\Omega$. To an arbitrary  finite sublattice $\Lambda \subset \subset \mathbb{Z}^d$, there corresponds a space, \mbox{$\Omega^\Lambda = \big\{  \sigma_\Lambda := \{ \sigma_x \}_{x\in \Lambda} \; : \; \sigma_x \in \Omega, \ \forall x \in \Lambda   \big\}$}, of spin configurations in $\Lambda$. Interactions between the spins are described by a potential, $\Phi$, which associates with every $X \subset \subset \mathbb{Z}^d$ a continuous function $\Phi (X) \in C(\Omega^{X})$ representing the interaction energy of all the spins in $X$. Given a finite set $\Lambda \subset \subset \mathbb{Z}^d$, the Hamilton function or Hamiltonian (with free boundary conditions) of the system confined to $\Lambda$ is defined by
\begin{equation} \label{H_gen}
H^{\Phi}_\Lambda= \sum_{X \subset \Lambda} \Phi (X),
\end{equation}
and the corresponding partition function at inverse temperature $\beta >0 $ by
\begin{equation} \label{Z}
Z_{ \beta, \Lambda} ( \Phi ) = \int_{\Omega^\Lambda} e^{-\beta H^{\Phi}_\Lambda(\sigma_\Lambda) } \prod_{x \in \Lambda} d\mu_0( \sigma_x).
\end{equation}
By $\mathcal{B}$ we denote the ``large'' Banach space of interactions consisting of all translation-invariant potentials $\Phi$ satisfying $\trinorm{\Phi} := \sum_{X \ni 0} \abs{X}^{-1} \cdot \norm{\Phi(X)}_\infty < \infty$. 
It is a classical result (see for example \cite{Si1}, Theorem II.2.1, or \cite{Ru5}, Theorem 2.4.1) that, for $\Phi \in \mathcal{B}$ and arbitrary $\beta > 0$, the free energy density
\begin{equation} \label{TDfclass}
f(\beta, \Phi) := - \beta^{-1} \lim_{\Lambda \nearrow \mathbb{Z}^{d}} \abs{\Lambda}^{-1} \log Z_{ \beta, \Lambda} ( \Phi )
\end{equation}
exists and is finite, where the thermodynamic limit is understood in the sense of van Hove.

In what follows, the potential $\Phi$ always includes a contribution due to an external magnetic field, $h$. For example, if $N=1$ the Hamiltonian is of the form
\begin{equation*}
H^{\Phi}_\Lambda (\sigma_\Lambda)= H^{\Phi,0}_\Lambda(\sigma_\Lambda) - h\sum_{ x \in \Lambda} \sigma_x, \qquad \text{where } H^{\Phi,0}_\Lambda(\sigma_\Lambda) = H^{\Phi}_\Lambda(\sigma_\Lambda) \big\vert_{h=0}, \text{ } \sigma_{x}\in \mathbb{R}.
\end{equation*}
For certain choices of $(\Phi, \mu_0)$ (see Section \ref{LYsurvey} for an overview), the partition function $Z_{ \beta, \Lambda} ( \Phi )$ is known to be non-zero in the regions $\mathbb{H}_\pm = \{ h \in \mathbb{C}: \pm \mathrm{Re } \ \!  h >0 \}$. This is the famous Lee-Yang theorem. 

The main result of this note can be formulated as follows.
Assuming that the pair $( \Phi, \mu_0)$ and the boundary conditions imposed at $\partial {\Lambda}$ are such that the Lee-Yang theorem holds (and under suitable assumptions on the decay of the measure $\mu_0$ at infinity), 
\begin{equation} \label{MAINRESULT}
\begin{split}
&\text{the connected correlation functions $\langle \sigma_{x_1}  ;  \! \! \ \dots \!\! \ ;  \sigma_{x_n} \rangle^c_{ \Lambda, \beta, h}$ are analytic in $h$ and} \\
&\text{have a unique thermodynamic limit analytic in $h$ in the Lee-Yang regions $\mathbb{H}_\pm$},
\end{split}
\end{equation}
where $x_1, \dots, x_n$ are arbitrary sites in $\mathbb{Z}^{d}$ and $\beta >0$; (see \eqref{corrs} for the definition of connected correlations). The proof of this result is given in Section \ref{analyticity}; see, in particular, Proposition \ref{analytic_corrs}. 
At first, $\Lambda$ is chosen to be a rectangle and periodic boundary conditions are imposed at $\partial\Lambda$. But the result can be seen to hold for \textit{arbitrary} boundary conditions for which the Lee-Yang theorem is valid. 
Earlier results of this kind can be found in \cite{Le1, LePe1}, where the example of the Ising ferromagnet is treated. Our methods enable us to extend (\ref{MAINRESULT}) to systems of multi-component spins ($N \geq 2$) satisfying a Lee-Yang theorem, as discussed at the end of Section \ref{analyticity}. Such systems include the rotor and the classical Heisenberg model, with suitable ferromagnetic conditions imposed. Similar results (for ``Duhamel correlation functions'') can also be proven for certain quantum-mechanical lattice spin systems.

In Section \ref{apps}, we discuss some applications of (\ref{MAINRESULT}) to classical spin systems and scalar Euclidean field theories. Assuming that the magnetic field $h$ is different from 0, uniqueness of the thermodynamic limit of correlations for a large class of boundary conditions is proven, properties of the magnetization and of the correlation length are reviewed and some bounds on critical exponents for the magnetization and the correlation length as functions of the magnetic field are recalled.

\section{The Lee-Yang theorem - a tour d'horizon} \label{LYsurvey}

In this section, we present an overview of classical (and quantum) lattice systems for which a Lee-Yang theorem is known to hold. For the sake of clarity, the models of interest are divided into four groups.

\subsection*{Ising-type models}

\noindent These are one-component models ($N=1$) with a Hamiltonian $H_\Lambda$, $\Lambda \subset \subset \mathbb{Z}^d$, given by
\begin{equation} \label{His}
H_\Lambda (\sigma_\Lambda) = - \sum_{\{x,y \} \subset \Lambda} J_{xy} \sigma_x \sigma_y - \sum_{x \in \Lambda} h_x \sigma_x, 
\end{equation}
where the first sum is over all pairs in $\Lambda$ ,$J_{xx}=0$, for all $x$, and $J_{xy} = J_{yx}$, for all $x,y$. Among the best results concerning this class of models is one established by Newman \cite{New1}, which is summarized in the following theorem.

\begin{theorem} \label{T1}
If the pair interaction in \eqref{His} is ferromagnetic, i.e., the couplings $J_{xy}$ satisfy
\begin{equation} \label{IsFerConds}
J_{xy} = J_{yx} \geq 0,  \qquad \forall x,y,
\end{equation}
and $\mu_0$ is an arbitrary signed (i.e., real-valued) measure on $\mathbb{R}$ that is even or odd, has the property that $\int_{\mathbb{R}} e^{b \sigma^2} d|\mu_0(\sigma)| < \infty$, for all $b \geq 0$, and satisfies the condition
\begin{equation} \label{zerointsLYcond}
\hat{\mu}_0(h):= \int_{\mathbb{R}}e^{h\sigma}d\mu_0(\sigma) \neq 0, \qquad \forall h \in \mathbb{H}_+:= \{ z \in \mathbb{C}  :  \mathrm{Re } \ \! z>0 \},
\end{equation}
then, for arbitrary $\beta > 0$, the partition function corresponding to $H_\Lambda$ in \eqref{His} satisfies
 \begin{equation} \label{LY1} 
Z_{\beta, \Lambda} \big (\{h_x \}_{x \in\Lambda} \big)= \int e^{-\beta H_\Lambda(\sigma_\Lambda)} \prod_{x \in \Lambda} d\mu_{0}(\sigma_x) \neq 0 \qquad \mathrm{if} \ h_x \in \mathbb{H}_+, \ \forall x \in \Lambda.
\end{equation}
\end{theorem}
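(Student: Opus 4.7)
The plan is to exploit the ferromagnetic structure and the single-spin Lee--Yang hypothesis \eqref{zerointsLYcond} through a heat-kernel representation of the partition function, which makes transparent how the Lee--Yang property propagates from each site to the full finite-volume system.

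First, I would establish the identity
\[
Z_{\beta,\Lambda}(\{h_x\}) \,=\, \left[\prod_{\{x,y\}\subset\Lambda} \exp\!\left(\beta J_{xy}\, \partial_{h'_x}\partial_{h'_y}\right)\right] \prod_{x\in\Lambda}\hat{\mu}_0(h'_x)\,\bigg|_{h'_x\,=\,\beta h_x},
\]
which follows from $(\partial_{h'_x}\partial_{h'_y})^n\, e^{h'_x\sigma_x+h'_y\sigma_y} = (\sigma_x\sigma_y)^n\, e^{h'_x\sigma_x+h'_y\sigma_y}$ and summing over $n$ to produce the pair Gibbs factor $e^{\beta J_{xy}\sigma_x\sigma_y}$. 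The exchange of differentiation and of the integrals against $d\mu_0$ is legitimate since the moment bound $\int e^{b\sigma^2}\,d|\mu_0|<\infty$ forces $\hat{\mu}_0$ to be entire of order at most $2$.

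Second, by \eqref{zerointsLYcond} the starting product $\prod_{x\in\Lambda}\hat{\mu}_0(h'_x)$ has no zeros whenever each $h'_x$ lies in $\mathbb{H}_+$; this provides the base case of a multivariate Lee--Yang (zero-freeness) statement in the joint variables $\{h'_x\}_{x\in\Lambda}$.

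Third, the crucial step: for any $a\geq 0$ and any two of the variables, the heat-type operator $\exp(a\, \partial_{h'_x}\partial_{h'_y})$ preserves the Lee--Yang property (nonvanishing on $\mathbb{H}_+^{\abs{\Lambda}}$) within the class of entire functions of order at most $2$. I would prove this by combining: (i) a Hadamard factorization of the starting function, made clean by the evenness or oddness of $\mu_0$ (so that its zeros come in symmetric pairs and hence, by \eqref{zerointsLYcond}, lie on the imaginary axis), in order to approximate $\prod_x\hat{\mu}_0$ uniformly on compact sets by polynomial functions still enjoying the Lee--Yang property; (ii) the Asano--Ruelle contraction lemma, which handles the polynomial case (essentially the original Lee--Yang theorem for $\pm 1$ spins); and (iii) Hurwitz's theorem to pass back to the limit. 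Iterating this step over the finitely many pairs $\{x,y\}\subset\Lambda$ then gives \eqref{LY1}.

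The hard part will be the approximation argument in step three: only in the order-$2$ class does Hadamard's theorem yield canonical products whose finite truncations, combined with the parity of $\mu_0$, themselves belong to the Lee--Yang class. Once these approximants are in hand, Asano--Ruelle reduces the problem to the classical Lee--Yang theorem for $\pm 1$ Ising spins, and Hurwitz closes the argument.
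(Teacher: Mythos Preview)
The paper does not give its own proof of Theorem~\ref{T1}; the result is quoted from Newman \cite{New1}. The closest thing to an argument in the paper is the paragraph following Theorem~\ref{T2}, which explains how the Lieb--Sokal multiplier theorem recovers Theorem~\ref{T1} for \emph{positive} $\mu_0$, by recognizing the ferromagnetic Boltzmann factor $B(\sigma_\Lambda)=\exp\bigl[\sum_{x,y}J_{xy}\sigma_x\sigma_y\bigr]$ as an element of $\overline{\mathcal P}_{\norm{J}+}\bigl(\mathbb H_+^{|\Lambda|}\bigr)$ and invoking \eqref{T2cor} together with Hurwitz' theorem. So there is no detailed proof in the paper to compare against.

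Your strategy is precisely that Lieb--Sokal mechanism recast in differential-operator form: the identity in your first step is the statement that multiplication by $B$ on the spin side corresponds, on the Laplace-transform side, to applying $\prod_{\{x,y\}}\exp\bigl(\beta J_{xy}\,\partial_{h'_x}\partial_{h'_y}\bigr)$, and your third step is the assertion that each such operator (with nonnegative coefficient) preserves the Lee--Yang class of entire functions of order at most~$2$. The parity-based Hadamard factorization you invoke for $\hat\mu_0$ is exactly Newman's key input and is what allows you to go beyond positive measures, unlike the paper's sketch via Theorem~\ref{T2}. The overall outline is sound and matches the literature the paper cites.

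One point to tighten in step~(ii): the Asano--Ruelle contraction lemma is a statement about identifying variables in \emph{multi-affine} polynomials, not about applying $\exp(a\,\partial_{h'_x}\partial_{h'_y})$ to a general Lee--Yang polynomial. To make the polynomial step rigorous you need either a polarization argument (reduce to multi-affine via Grace--Walsh--Szeg\H{o}, then contract) or the direct operator-preservation results in \cite{LiSo1}. Your parenthetical ``essentially the original Lee--Yang theorem for $\pm 1$ spins'' is therefore a shortcut; the connection is genuine but requires an intermediate step you have not named.
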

The condition \eqref{zerointsLYcond} is quite natural in that it requires \eqref{LY1} to hold for non-interacting spins, i.e., $J_{xy} =0$, for all $x,y$. (Clearly, \eqref{LY1} trivially implies \eqref{zerointsLYcond}.) Moreover, Theorem \ref{T1} continues to hold if, at different sites of the lattice, different a-priori distributions are chosen, provided each $\mu_{0,x}$, $x \in \Lambda$, satisfies the conditions on $\mu_0$ formulated above.

In the (physically most relevant) case of positive measures, Lieb and Sokal \cite{LiSo1} have obtained rather deep insights regarding the Lee-Yang property \eqref{LY1}. In order to summarize salient features of their findings, we introduce some further notation. For any $a \geq 0$ and $n \in \mathbb{N}$, let $\mathcal{A}_{a+}^n$ be the (Fr\'echet) space of entire functions on $\mathbb{C}^n$ satisfying $\norm{f}_b := \sup_{z \in \mathbb{C}^n} e^{-b |z|^2} |f(z)| < \infty$, for all $b> a$. Given an open set $O \subset \mathbb{C}^n$, we define $\mathcal{P} (O)$ to be the class of polynomials defined on $\mathbb{C}^n$ that do not vanish in $O$. 
We denote by $\overline{\mathcal{P}}_{a+} (O)$ its closure in $\mathcal{A}_{a+}^n$. Given an arbitrary distribution $\mu \in \mathcal{S}'(\mathbb{R}^{n})$ (the space of tempered distributions on $\mathbb{R}^{n}$), we say that $\mu \in T^{n}$ if, in addition, $ e^{a  |x|^2} \mu (x) \in \mathcal{S}'(\mathbb{R}^{n})$, for all $a >0$. A finite, positive measure $\mu$ on $\mathbb{R}^n$ is henceforth called a \textit{Lee-Yang measure} if $\mu \in T^n$ and if its Laplace transform, $\hat{\mu}$, satisfies $\hat{\mu} \in \overline{\mathcal{P}}_{0+} (\mathbb{H}^n_+)$, where $\mathbb{H}^n_+ = \mathbb{H}_+ \times \dots \times \mathbb{H}_+\subset \mathbb{C}^n$ and $\mathbb{H}_+ = \{ z \in \mathbb{C}  :  \mathrm{Re } \ \!  z>0 \}$, as before. 

\begin{theorem} \label{T2} \cite{LiSo1} If $\mu$ is a Lee-Yang measure on $\mathbb{R}^n$, $B \in \overline{\mathcal{P}}_{a+} (\mathbb{H}_+^n)$ for some $a \geq 0$, $B$ is non-negative on $\mathrm{supp}(\mu) \subseteq \mathbb{R}^n$ and strictly positive on a set of non-zero $\mu$-measure, then $B\mu$ is also a Lee-Yang measure on $\mathbb{R}^n$.
\end{theorem}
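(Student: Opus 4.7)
The plan is to translate the conclusion into a statement about a constant-coefficient differential operator acting on the Laplace transform $\hat\mu$, to prove that operator statement first for polynomial $B$, and then to extend to general $B$ by approximation combined with Hurwitz's theorem.

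First I would establish the identity
\[
\widehat{B\mu}(z) \;=\; B(\partial_z)\,\hat\mu(z), \qquad z \in \mathbb{H}_+^n,
\]
by expanding $B$ in its Taylor series around the origin and interchanging the resulting sum with the integral against $\mu$; the Gaussian bound $\norm{B}_b < \infty$ for some $b > a$ together with $\mu \in T^n$ justifies the exchange on compact subsets of $\mathbb{H}_+^n$. The hypotheses that $B \ge 0$ on $\mathrm{supp}\,\mu$ and that $B > 0$ on a set of positive $\mu$-measure imply that $B\mu$ is a non-trivial finite positive Borel measure, while the same growth bounds yield $B\mu \in T^n$. The only substantive claim left is thus $\widehat{B\mu} \in \overline{\mathcal{P}}_{0+}(\mathbb{H}_+^n)$.

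The heart of the argument is the following polynomial operator lemma: if $P$ and $Q$ are polynomials in $n$ complex variables, both non-vanishing on $\mathbb{H}_+^n$, then $Q(\partial_z)P(z)$ is again non-vanishing on $\mathbb{H}_+^n$. I would prove this in two steps. First, the classical polarisation procedure (replacing each variable $z_j$ of degree $d_j$ in $P$ and $Q$ by $d_j$ fresh variables and symmetrising) reduces the claim to the case of multi-affine $P$ and $Q$; Grace--Walsh--Szeg\H{o}-type arguments ensure that half-plane non-vanishing is preserved by polarisation and its inverse. Second, in the multi-affine setting, $Q(\partial_z)P(z)$ can be realised via iterated Asano-type contractions applied to the auxiliary polynomial $Q(w)P(z)$, each contraction identifying a pair of variables and preserving non-vanishing on the product of half-planes. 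This step is the genuinely new analytic content of the Lieb--Sokal theorem and is the one I expect to be the main obstacle.

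To conclude, given a general $B \in \overline{\mathcal{P}}_{a+}(\mathbb{H}_+^n)$, I would choose polynomial approximations $B_k \to B$ in $\mathcal{A}_{a+}^n$ with $B_k \in \mathcal{P}(\mathbb{H}_+^n)$, and polynomial approximations $P_m \to \hat\mu$ in $\mathcal{A}_{0+}^n$ with $P_m \in \mathcal{P}(\mathbb{H}_+^n)$. The polynomial lemma yields $B_k(\partial_z)P_m \in \mathcal{P}(\mathbb{H}_+^n)$; a continuity argument for the bilinear map $(B,f) \mapsto B(\partial)f$ between the relevant Fréchet spaces, combined with a diagonal extraction, produces a sequence in $\mathcal{P}(\mathbb{H}_+^n)$ converging to $\widehat{B\mu}$ in $\mathcal{A}_{0+}^n$. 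By Hurwitz's theorem, the limit is either non-vanishing on $\mathbb{H}_+^n$ or identically zero there; the second alternative is ruled out by observing that $\widehat{B\mu}(t,\dots,t) \to \int B\,d\mu > 0$ as $t \downarrow 0$. Hence $\widehat{B\mu} \in \overline{\mathcal{P}}_{0+}(\mathbb{H}_+^n)$, completing the proof.
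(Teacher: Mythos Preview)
The paper does not supply its own proof of Theorem~\ref{T2}: the result is merely quoted from Lieb--Sokal \cite{LiSo1}, and the text following the statement only explains how Theorem~\ref{T2} specialises to recover Theorem~\ref{T1}. There is therefore no in-paper proof to compare against.

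That said, your outline is essentially the strategy of the original Lieb--Sokal argument: reduce to the differential-operator identity $\widehat{B\mu}=B(\partial)\hat\mu$, prove the closure of $\mathcal{P}(\mathbb{H}_+^n)$ under $Q(\partial)$ for polynomial $Q\in\mathcal{P}(\mathbb{H}_+^n)$, and pass to limits in the Fr\'echet topology of $\mathcal{A}_{a+}^n$. Two minor remarks. First, the Hurwitz step at the end is not needed for the stated conclusion: the definition of a Lee-Yang measure asks only that $\widehat{B\mu}\in\overline{\mathcal{P}}_{0+}(\mathbb{H}_+^n)$, i.e.\ that it be a \emph{limit} of polynomials in $\mathcal{P}(\mathbb{H}_+^n)$, which your diagonal sequence already provides; non-vanishing of the limit itself is a consequence, not a hypothesis to verify. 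Second, the continuity of $(B,f)\mapsto B(\partial)f$ between the relevant $\mathcal{A}_{a+}$ spaces is genuinely delicate (this is where the shift $a\mapsto 0$ in the subscript is used), and in Lieb--Sokal it is handled by explicit norm estimates rather than abstract continuity; you should expect to do the same.
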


This generalizes Theorem \ref{T1} (for positive $\mu_{0}$). Indeed, first note that the product measure $d\mu (\sigma_\Lambda) = \prod_{x\in \Lambda} d \mu_0 (\sigma_x)$ with $\mu_0 \in T^1$ satisfying \eqref{zerointsLYcond} is a Lee-Yang measure on $\mathbb{R}^{|\Lambda|}$; see \cite[Corollary 3.3]{LiSo1}. Let $B$ be the Boltzmann factor corresponding to the Ising pair interaction in \eqref{His}, i.e., $B(\sigma_\Lambda)= \exp \big[ \sum_{x,y \in \Lambda} J_{xy} \sigma_x \sigma_y \big]$, with $J_{xy} \in \mathbb{R}$. It is not hard to see that $B \in \mathcal{A}_{\norm{J}+}^{|\Lambda|}$, where $\norm{J}$ refers to the matrix norm of $J = (J_{xy})_{x,y \in \Lambda}$ when $\mathbb{R}^{|\Lambda|}$ is equipped with the Euclidean norm. 
One then finds \cite[Proposition 2.7]{LiSo1} that $B \in \overline{\mathcal{P}}_{\norm{J}+} \big(\mathbb{H}^{|\Lambda|}_+\big)$ \textit{if and only if} $J_{xy} \geq 0$, for all $x,y \in \Lambda$. Under this condition it follows from Theorem \ref{T2} that the partition function $Z_{\beta, \Lambda}\big(\{h_x \}_{x \in\Lambda} \big)$ in Theorem \ref{T1} satisfies
\begin{equation} \label{T2cor}
Z_{\beta=1, \Lambda}\big(\{h_x \}_{x \in\Lambda} \big) = \widehat{B\mu}\big(\{h_x \}_{x \in\Lambda} \big) \in \overline{\mathcal{P}}_{0+} \big(\mathbb{H}_+^{|\Lambda|}\big),
\end{equation}
which, by virtue of Hurwitz' Theorem \cite[p. 178]{Ahlf1}, implies that \eqref{LY1} holds. Furthermore, \eqref{T2cor} says that $B$, the Boltzmann factor pertaining to the \textit{ferromagnetic} Ising pair interaction, is a ``multiplier'' for Lee-Yang measures.

The class of admissible measures $\mu$ in Theorem \ref{T2} may be further enlarged by specifying some $\textit{falloff}$ at $\infty$, see \cite[Definition 3.1]{LiSo1}, which amounts to requiring in Theorem \ref{T1} that $\int_{\mathbb{R}} e^{b \sigma^2} d|\mu_0(\sigma)| < \infty$, for \textit{some} $b > 0$ only. In this case, \eqref{LY1} only holds for sufficiently \textit{small} $\beta>0$, depending on the choice of $\lbrace J_{xy} \rbrace$; see \cite[Remark 1.1]{New1} and \cite[Corollary 3.3]{LiSo1}. 

On a historical note, the road from the seminal article of Lee and Yang \cite{LY2} on the Ising model (i.e. $\mu_0 =(\delta_1 + \delta_{-1})/2$ in \eqref{LY1}) to the treatise \cite{LiSo1} of Lieb and Sokal spanned almost three decades, with important contributions by Asano \cite{As2}, Suzuki \cite{Su1} and Griffiths \cite{Gr1} (all concerning discrete spins), Ruelle's proof of a more general zero theorem \cite{Ru1} for Ising spins (generalizing a contraction method first introduced by Asano \cite{As11}), and the work of Simon and Griffiths \cite{SiGr1}, which, among other things, establishes \eqref{LY1} for $d\mu_0 (\sigma) = \exp [-a \sigma^4 - b \sigma^2] d\sigma$, with $a>0$ and $b \in \mathbb{R}$. (This particular measure indeed satisfies \eqref{zerointsLYcond}, see \cite[Example 2.7]{New1}. It arises in the lattice approximation to the $(\phi^4)_{2,3}$ Euclidean field theory, cf. \cite{Si3}, Chapters VIII and IX, for which the Lee-Yang theorem is shown to hold \cite[Theorem 6]{SiGr1}.) 

\subsection*{One-component models with more general interactions}

\noindent Next, we consider Hamiltonians of the form
\begin{equation} \label{H2}
H^{\Phi}_\Lambda(\sigma_\Lambda) = \sum_{ X \subset \Lambda \; : \; \abs{X}\geq 2} \Phi(X) (\sigma_X) - \sum_{x \in \Lambda}h_x \sigma_x \ \equiv \  H^{\Phi,0}_\Lambda (\sigma_\Lambda)  - \sum_{x \in \Lambda}h_x (\sigma_x-1),
\end{equation}
for one-component spins $\sigma_x \in \mathbb{R}$, where we have added a constant linear in $\lbrace h_{x} \rbrace$ for later convenience. Theorem \ref{T2} implies that if the Boltzmann factor $B_\beta(\sigma_\Lambda)= \exp \big[- \beta H^{\Phi,0}_\Lambda (\sigma_\Lambda) \big]$, viewed as a function on $\mathbb{C}^{|\Lambda|}$ for fixed $\beta >0$, belongs to $\overline{\mathcal{P}}_{a+} \big(\mathbb{H}_+^{|\Lambda|}\big)$, for some $a \geq 0$, then the partition function $Z_{ \beta, \Lambda} \big (\{h_x \}_{x \in\Lambda} \big)$ corresponding to the Hamiltonian \eqref{H2} satisfies
 \begin{equation} \label{LY2} 
Z_{\beta, \Lambda}\big(\{h_x \}_{x \in\Lambda} \big) \neq 0, \qquad \mathrm{for} \ h_x \in \mathbb{H}_+, \text{ }x \in \Lambda,
 \end{equation}
for any Lee-Yang measure $\mu_0$ as defined above Theorem \ref{T2}. 

If $\mu_0 =(\delta_1 + \delta_{-1})/2$ (Ising spins), reasonably explicit results are known. The partition function is then seen to be a multi-affine polynomial in the activity variables $z_x = \exp[-2 \beta h_x]$, $x\in \Lambda$: 
\begin{equation} \label{Zgenints}
Z_{\beta, \Lambda}(\{z_x\}_{x \in \Lambda}) = \sum_{X \subset \Lambda} E_X(\beta) \prod_{x \in X} z_x, 
\end{equation}
where $E_X(\beta) = \exp \big[ - \beta H_\Lambda^{\Phi,0}(\sigma_\Lambda \ \vert \sigma_{x} = -1, x\in X, \; \sigma_{x} = 1, x\in \Lambda \setminus X) \big]$. The Lee-Yang property \eqref{LY2} then asserts that $Z_{\beta, \Lambda}(\{z_x\}_{x \in \Lambda})$ does not vanish whenever $|z_x|<1$, for all $x \in \Lambda.$ If $H_\Lambda^{\Phi,0}$ is invariant under $\sigma_x \mapsto -\sigma_x$, $x \in \Lambda$ (``spin-flip'' symmetry), the coefficients satisfy $E_X(\beta)= E_{\Lambda \setminus X}(\beta)$, for all $X \subset \Lambda$. In particular, if $z_x=z$, for all $x$, this yields $Z_{\beta, \Lambda}(z) = z^{|\Lambda|} \cdot Z_{\beta, \Lambda}(z^{-1})$, for all $z \neq 0$, and therefore $Z_{\beta, \Lambda} (z) = 0$ implies that $|z|=1$. This is the Lee-Yang circle theorem. Ruelle's recent results \cite[Lemma 8 and Theorem 9]{Ru4} yield the following theorem.
\begin{theorem} \label{T3}
Let $Z_{\beta, \Lambda}(\{z_x\}_{x \in \Lambda})$ be \textit{any} multi-affine polynomial of the form \eqref{Zgenints}, with coefficients $E_X(\beta) = E_{\Lambda \setminus X}(\beta) >0$, for all $X \subset \Lambda$, and all $\beta > 0$, that satisfies
\begin{equation} \label{LYmultiaffine}
Z_{\beta, \Lambda}(\{z_x\}_{x \in \Lambda}) \neq 0, \qquad \mathrm{if  } \ |z_x| < 1 \ \forall x \in \Lambda,
\end{equation}
for all $\beta >0$. Then $Z_{\beta, \Lambda}$ is the partition function obtained by choosing $\Phi$, in \eqref{H2}, to be an Ising ferromagnetic pair interaction, i.e., $\Phi(\{ x,y\})(\sigma_x, \sigma_y)= -J_{xy}\sigma_x \sigma_y$, for some $J_{xy}=J_{yx} \geq 0$, and $\Phi(X)=0$, otherwise.
\end{theorem}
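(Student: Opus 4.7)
The plan is to convert the hypotheses on $Z_{\beta,\Lambda}$ into hypotheses on an effective Hamiltonian and then use the Lee-Yang property to constrain its multi-body structure. First, strict positivity of the coefficients $E_X$ lets us define $-\beta H^{\Phi,0}_\Lambda(\sigma_\Lambda) := \log E_{X(\sigma)}$, with $X(\sigma) = \{x \in \Lambda : \sigma_x = -1\}$, and expand this function uniquely in the multi-affine basis on $\{-1,+1\}^\Lambda$:
\begin{equation*}
-\beta H^{\Phi,0}_\Lambda(\sigma_\Lambda) = c + \sum_{\emptyset \neq Y \subset \Lambda} h_Y \prod_{x \in Y} \sigma_x.
\end{equation*}
The relation $E_X = E_{\Lambda\setminus X}$ is equivalent to the spin-flip symmetry $H^{\Phi,0}_\Lambda(-\sigma_\Lambda) = H^{\Phi,0}_\Lambda(\sigma_\Lambda)$, which immediately forces $h_Y = 0$ for every $Y$ of odd cardinality. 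The theorem thus reduces to two assertions: (i) $h_Y = 0$ whenever $|Y| \geq 4$, and (ii) $h_{\{x,y\}} \geq 0$ for every pair $\{x,y\} \subset \Lambda$.

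For (ii), I would restrict $Z$ to a pair $\{a,b\} \subset \Lambda$ by setting $z_x = 0$ for $x \in \Lambda \setminus \{a,b\}$. The resulting bi-affine polynomial in $(z_a, z_b)$ inherits non-vanishing on the open unit bidisc from \eqref{LYmultiaffine}, and the classical $|\Lambda|=2$ Lee-Yang analysis --- which amounts to the fact that a reciprocal quadratic has all roots on $|z|=1$ iff a single inequality between its coefficients holds --- gives an inequality on the four surviving $E_X$'s. Once (i) is established, this inequality collapses to $h_{\{a,b\}} \geq 0$. An equivalent route is to evaluate $(\partial^2/\partial z_a \partial z_b) \log Z$ at the symmetric point $z_x = 1$ for all $x \in \Lambda$; spin-flip symmetry kills every contribution that is odd in the frozen variables, and, modulo (i), isolates $h_{\{a,b\}}$ with a sign controlled by the Lee-Yang property.

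For (i) the subtle point is that $Z$ restricted to the diagonal $z_x = z$ can satisfy single-variable Lee-Yang even in the presence of non-trivial multi-body couplings, so one-variable probes are insufficient; the proof must exploit the full non-vanishing in the polydisc. The natural vehicle is the \textit{Asano contraction procedure}: starting from a multi-affine Lee-Yang polynomial in many variables, one can identify (``contract'') pairs of variables via $1, z_1, z_2, z_1z_2 \mapsto 1, 0, 0, z$ while preserving non-vanishing of the contracted polynomial in a corresponding polydisc. Iterating such contractions reduces the question to polynomials on small subsets of variables, where positivity and spin-flip symmetry together with an explicit algebraic computation force any hypothetical multi-body coefficient to vanish. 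This is the content of Ruelle's Lemma~8 and Theorem~9 cited in the statement.

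The principal obstacle I anticipate is the combinatorial book-keeping in the contraction step: one must arrange the Asano contractions so that a hypothetical $h_Y \neq 0$ with $|Y| \geq 4$ produces, on the contracted polynomial, an obstruction that cannot be absorbed by admissible pair couplings on the smaller site set --- all while keeping the contractions compatible with the spin-flip symmetry that underlies the reduction to (i)--(ii). Ensuring this non-cancellation is the crux of the argument, and is presumably what Ruelle's Lemma~8 packages into a clean reusable statement.
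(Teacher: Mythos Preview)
The paper does not give its own proof of Theorem~\ref{T3}; it simply attributes the result to Ruelle \cite[Lemma~8 and Theorem~9]{Ru4} and then discusses consequences. So there is no in-paper argument to compare against beyond what the surrounding text reveals about the mechanism.

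That surrounding text, however, exposes a genuine gap in your plan for~(i): you never invoke the hypothesis that \eqref{LYmultiaffine} holds for \emph{all} $\beta>0$, and this is the heart of the matter. Immediately after stating Theorem~\ref{T3} the paper explains that a spin-flip symmetric Hamiltonian with a genuine four-body coupling \emph{does} satisfy the Lee--Yang property at sufficiently low temperature --- see the worked example culminating in conditions \eqref{condition1}--\eqref{condition2}. Hence no amount of Asano contraction at a fixed $\beta$ can obstruct $h_Y\neq 0$ for $|Y|\ge 4$: after reducing to four sites you may well land on a polynomial that is perfectly Lee--Yang at that temperature. What kills multi-body couplings is the requirement that Lee--Yang persist as $\beta\searrow 0$; it is the high-temperature behaviour (an infinitesimal analysis near the $\beta=0$ polynomial $\prod_x(1+z_x)$) that is incompatible with them. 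Your description of Ruelle's Lemma~8 as ``packaging the Asano-contraction combinatorics'' therefore mislocates the key idea.

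There is also a smaller problem in~(ii). Setting $z_x=0$ for $x\notin\{a,b\}$ yields a bi-affine polynomial $A+Bz_a+Cz_b+Dz_az_b$ that does \emph{not} inherit the reciprocal symmetry (the relation $E_X=E_{\Lambda\setminus X}$ is relative to $\Lambda$, not to $\{a,b\}$), so the ``reciprocal quadratic'' remark is inapplicable. Moreover, the Lee--Yang condition for such a polynomial with positive coefficients is \emph{not} $AD\ge BC$: for instance $A=10$, $B=C=1$, $D=10^{-2}$ is nonvanishing on the open bidisc while $AD<BC$. Granting~(i), one does have $AD/(BC)=e^{4\beta J_{ab}}$, but at large $\beta$ the restricted polynomial can be Lee--Yang even when $J_{ab}<0$ (the other couplings suppress $B,C,D$). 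Again it is small $\beta$ that isolates the sign of $J_{ab}$, so~(ii) too leans on the all-$\beta$ hypothesis that your sketch omits.
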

This is a \textit{converse} to the Lee-Yang theorem for Ising spins. Ruelle also shows that, for general interactions $\Phi$ in \eqref{H2} with spin-flip symmetry, the Lee-Yang theorem can only hold at sufficiently \textit{low} temperatures. (We refer the reader to \cite{Ru4} for a precise statement, and to \cite{LeRu1} for a concrete example.) To further illustrate this point, we consider the Hamiltonian \eqref{H2} with
\begin{equation*}
H_\Lambda^{\Phi,0}(\sigma_\Lambda) := -  \sum_{U \subset \Lambda: \; |U|=2,4} J_U \sigma^U, \qquad \text{where } \sigma^U = \prod_{x \in U} \sigma_x \text{ and } J_U \in \mathbb{R},
\end{equation*} 
which has a pair and a four-spin interaction, and is invariant under $\sigma_x \mapsto -\sigma_x$, $x \in \Lambda$. The corresponding partition function $Z_{\beta, \Lambda}(z_\Lambda)$ is given by \eqref{Zgenints}. Using the identities $ \prod_{x \in U}(1 \pm \sigma_x) =  \sum_{X \subset U} (\pm 1)^{\abs{X}} \sigma^X$, for any $U \subset \Lambda$ (the term corresponding to $X = \emptyset$ is understood to be $1$), which imply
\begin{equation*}
p_U (\sigma_U) := \prod_{x \in U}\frac{1}{2}(1 + \sigma_x) +  \prod_{x \in U}\frac{1}{2}(1 - \sigma_x) \ = \ 2^{1-|U|} \sum_{\substack{X \subset U \\ |X| \text{ even}}} \sigma^X,
\end{equation*}
one deduces that $H_\Lambda^{\Phi,0}$ may be expressed in terms of $p_U (\sigma_U)$ rather than $\sigma^U$. Indeed, 
\begin{equation*} 
H_\Lambda^{\Phi,0} = -  \tilde{J}_0 \ - \sum_{ U \subset \Lambda : \; |U|=2,4} \tilde{J}_U  \cdot p_U (\sigma_U), \quad \text{where }  \tilde{J}_U =\left \{
\begin{array}{lr}
8 J_U, & \text{if } |U |=4  \\
2 J_U - 2 \sum_{ U \subset X \subset \Lambda: \; |X| =4} J_X, & \text{if } |U|=2 
\end{array} \right.,
\end{equation*}
and $\tilde{J}_0$ is an irrelevant additive constant, which we may neglect. We note that, if $\sigma_\Lambda$ is the unique spin configuration such that $\{ x \in \Lambda : \; \sigma_x = -1  \}=X$, for some given subset $X$ of $\Lambda$, the quantity $p_U(\sigma_U)$ vanishes unless $U \subset X$ or $U \subset \Lambda \backslash X$, in which case $p_U(\sigma_U)=1$. 
It follows that $Z_{ \beta, \Lambda} (z_\Lambda) = \sum_{X \subset \Lambda}  E_X^{(2)} (\beta) \cdot E_X^{(4)} (\beta) z^X \equiv \big( Z^{(2)}_{ \beta, \Lambda} * Z^{(4)}_{\beta, \Lambda}\big)(z_\Lambda)$ ($\ast$ is called \textit{Schur-Hadamard} product), where $Z^{(k)}_{ \beta, \Lambda}(z_\Lambda) = \sum_{X \subset \Lambda} E_X^{(k)} (\beta) z^X$, $k=2,4$, with coefficients $E_X^{(k)} (\beta) = \prod_{\substack{U \subset X \text{ or }  U \subset \Lambda \setminus X \\ |U|=k}} e^{\beta \tilde{J}_U}$, for all $X \subset \Lambda$. This $*$-product representation of $Z_{ \beta, \Lambda}$ is very useful, because it essentially allows one to consider $Z^{(k)}_{ \beta, \Lambda}$, $k=2,4$, separately, as we will now see. By \cite[Example 7(d)]{Ru4}, $Z^{(4)}_{ \beta, \Lambda}$ satisfies \eqref{LYmultiaffine} at temperature $\beta >0$ whenever 
\begin{equation}\label{condition1}
\beta \tilde{J}_U = 8 \beta J_U \geq \ln 2 \text{ (or $J_U =0$), for all $U \subset \Lambda$ with $|U|=4$},
\end{equation}
i.e., at sufficiently \textit{low} temperature, for given $J_U \geq 0$, $|U|=4$. Similarly, one obtains from \cite[Example 7(a)]{Ru4} that $Z^{(2)}_{ \beta, \Lambda}$ fulfills \eqref{LYmultiaffine} if $\beta \tilde{J}_U \geq 0$, for all $U \subset \Lambda$ with $|U|=2$, or, equivalently, if
\begin{equation}\label{condition2}
J_U \geq  \sum_{X: \; U \subset X \subset \Lambda \text{ and } |X| =4} J_X, \quad \text{for all $U \subset \Lambda$ with $|U|=2$},
\end{equation}
which implies that $J_U \geq 0$, for $|U|=2$, and also requires the four-spin interaction to be suitably \textit{small} as compared to the pair interaction. Finally, since $ Z_{ \beta, \Lambda} = Z^{(2)}_{ \beta, \Lambda} * Z^{(4)}_{\beta, \Lambda}$, Proposition 2(c) in \cite{Ru4} (see also  \cite[Corollary 2.15]{Rod1}; a crucial ingredient of the proof is a contraction method first introduced by Asano \cite[Definition 3]{As11}) yields that $Z_{ \beta, \Lambda}$, subject to the constraints \eqref{condition1} and \eqref{condition2}, satisfies the Lee-Yang property \eqref{LYmultiaffine}, which is consistent with Theorem \ref{T3}.

\subsection*{Multi-component spins}

\noindent For models of classical $N$-component spins, $N \geq 2$, we denote the spin variable at site $x$ by $\boldsymbol{\sigma}_x = ( \sigma_x^i \; : \; i=1,\dots,N)$, (we use Latin superscripts to indicate the components). For $N=1$, the ``Lee-Yang region'' $\mathbb{H}_+$ is the region where the Laplace transform of the single-spin measure does not vanish, cf. condition \eqref{zerointsLYcond}. An example of a suitable generalization to $N\geq 2$ is to consider measures $\mu \in T^N$ (see above Theorem \ref{T2}) obeying 
\begin{equation} \label{NzerointsLYcond}
\int e^{h \sigma^1} d \mu(\boldsymbol{\sigma}) \neq 0, \qquad \text{for} \quad \mathrm{Re} \ \! h \neq 0,
\end{equation}
i.e., the magnetic field $h$ is assumed to point in the 1-direction. If, in addition to satisfying \eqref{NzerointsLYcond}, $\mu$ is assumed to be rotationally invariant, then it follows \cite[Proposition 4.1]{LiSo1} that its Laplace transform $\hat{\mu}( \boldsymbol{h})$, $\boldsymbol{h} = (h^1, \dots, h^N)$, belongs to $\overline{\mathcal{P}}_{0+}(\Omega_N)$, where $\Omega_N = \Omega_N^+ \cup \Omega_N^-$ and
\begin{equation} \label{Zero_free_domain}
 \Omega_N^\pm = \Big\{ \boldsymbol{h}=(h^1, \dots, h^N) \in \mathbb{C}^N \; : \; \pm \mathrm{Re} \ \! h^1 >  \sum_{i=2}^N |h^i| \Big\}.
\end{equation}
For the Hamiltonian
\begin{equation} \label{H3}
H_\Lambda = - \sum_{ \{ x , y \} \subset \Lambda} \sum_{i=1}^N J_{xy}^i \sigma_x^i \sigma_y^i \ -  \ \sum_{x\in \Lambda} \sum_{i=1}^N h_x^i\sigma_x^i, 
\end{equation}
the following result \cite[Corollaries 4.4 and 5.5]{LiSo1} (see also \cite{Dun1} for the plane rotator model) is known. 
\begin{theorem} \label{T4}
Let $J_{xy}^i\in \mathbb{R}$, for all $x,y \in \Lambda$ and $i=1,\dots,N$, with 
\begin{equation} \label{NBadConds}
J_{xy}^1 \geq \sum_{i=2}^N |J_{xy}^i|, \qquad \forall x,y \in \Lambda,
\end{equation}
and assume that $\mu_0 \in T^N$ is a rotationally invariant measure on $\mathbb{R}^N$ satisfying condition \eqref{NzerointsLYcond}. Then the partition function $Z_{\beta, \Lambda}\big(\{\boldsymbol{h}_x \}_{x \in\Lambda} \big)$, $\beta >0$, corresponding to the Hamiltonian \eqref{H3} does not vanish whenever $\boldsymbol{h}_x \in \Omega_N^+$, for all $x \in \Lambda$.
\end{theorem}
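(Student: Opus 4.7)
The strategy is to reduce Theorem \ref{T4} to a multiplier statement for Lee-Yang measures analogous to Theorem \ref{T2}, but formulated for the product domain $(\Omega_N^+)^{|\Lambda|}$ rather than $\mathbb{H}_+^{|\Lambda|}$. The convex cone structure of $\Omega_N^+$, and in particular its rotational invariance in the transverse coordinates $(h^2,\dots,h^N)$, is the geometric feature that replaces the right half-plane used in the $N=1$ setting of Theorem \ref{T1}.

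First, I would invoke \cite[Proposition 4.1]{LiSo1} (quoted in the excerpt just above \eqref{Zero_free_domain}) to conclude that the rotationally invariant single-site measure $\mu_0$, satisfying \eqref{NzerointsLYcond}, has Laplace transform $\hat{\mu}_0 \in \overline{\mathcal{P}}_{0+}(\Omega_N)$. Since the non-interacting partition function factorizes as $\prod_{x \in \Lambda} \hat{\mu}_0(\boldsymbol{h}_x)$, the product measure $d\mu(\boldsymbol{\sigma}_\Lambda) = \prod_{x \in \Lambda} d\mu_0(\boldsymbol{\sigma}_x)$ has Laplace transform lying in $\overline{\mathcal{P}}_{0+}\big((\Omega_N^+)^{|\Lambda|}\big)$, by a straightforward variant of \cite[Corollary 3.3]{LiSo1}. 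In other words, $\mu$ is a Lee-Yang measure for the polydomain $(\Omega_N^+)^{|\Lambda|}$.

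Next, I would show that the pair-interaction Boltzmann factor $B_\beta(\boldsymbol{\sigma}_\Lambda) = \prod_{\{x,y\}\subset\Lambda} \exp\big[\beta \sum_{i=1}^N J_{xy}^i \sigma_x^i \sigma_y^i\big]$ acts as a multiplier for such Lee-Yang measures, i.e., that $\widehat{B_\beta\mu}$ has no zeros on $(\Omega_N^+)^{|\Lambda|}$. Following the approach sketched after Theorem \ref{T3} in the excerpt (based on Asano's contraction \cite[Definition 3]{As11} and Ruelle \cite{Ru1}), I would first verify the statement for a single pair $\{x,y\}$ --- i.e., show that the two-site partition function, viewed as a function of $(\boldsymbol{h}_x, \boldsymbol{h}_y)$, does not vanish on $\Omega_N^+ \times \Omega_N^+$ --- and then assemble the single-pair factors along shared sites by an iterative contraction performed separately in each of the $N$ component directions, exploiting the product structure of $\Omega_N^+$ in the transverse coordinates.

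The main obstacle is precisely the two-site multiplier statement, and this is where the hypothesis \eqref{NBadConds} enters essentially. The inequality $J_{xy}^1 \geq \sum_{i \geq 2}|J_{xy}^i|$ is the precise multi-component analog of the ferromagnetic condition $J_{xy} \geq 0$ in Theorem \ref{T1}; it ensures that the defining inequality $\mathrm{Re}\,h^1 > \sum_{i\geq 2}|h^i|$ of $\Omega_N^+$ is preserved under the action on external fields induced by the pair Boltzmann factor (the transverse couplings being dominated by the $1$-direction coupling). Approximating $B_\beta$ by elements of $\mathcal{P}\big((\Omega_N^+)^{|\Lambda|}\big)$ and applying Hurwitz' theorem in the Fr\'echet topology of $\mathcal{A}_{a+}^{N|\Lambda|}$ (with $a$ proportional to $\beta$ and to the operator norm of $(J_{xy}^i)_{x,y,i}$), as in the paragraph below Theorem \ref{T2}, then upgrades non-vanishing from polynomial approximants to the full entire function, yielding the desired conclusion that $Z_{\beta,\Lambda}(\{\boldsymbol{h}_x\}_{x\in\Lambda}) \neq 0$ whenever $\boldsymbol{h}_x \in \Omega_N^+$, for all $x \in \Lambda$.
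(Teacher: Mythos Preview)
The paper does not actually prove Theorem \ref{T4}; it is quoted as a known result from \cite[Corollaries 4.4 and 5.5]{LiSo1} (with \cite{Dun1} cited for the plane rotator case). So there is no ``paper's own proof'' to compare against beyond the Lieb--Sokal argument itself.

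Your high-level strategy---show that the product of rotationally invariant single-site measures is a Lee--Yang measure for the polydomain $(\Omega_N^+)^{|\Lambda|}$ via \cite[Proposition 4.1]{LiSo1}, then show that the pair Boltzmann factor is a multiplier for this class, with \eqref{NBadConds} playing the role of the ferromagnetic condition---is indeed the Lieb--Sokal architecture and is the right outline. Two points deserve more care, however. First, your invocation of Asano contraction ``performed separately in each of the $N$ component directions, exploiting the product structure of $\Omega_N^+$'' is imprecise: $\Omega_N^+$ is a cone defined by the coupled inequality $\mathrm{Re}\,h^1 > \sum_{i\geq 2}|h^i|$ and is \emph{not} a product of one-variable domains in the transverse directions, so a coordinate-wise contraction argument does not go through in the naive way you suggest. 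Second, Asano contraction in the form of \cite{As11,Ru1} is tailored to multi-affine (Ising-type) partition functions; Lieb and Sokal do not proceed this way for continuous $N$-component spins. Their actual mechanism for the two-site multiplier statement is closer to a differential-operator/approximation argument within the $\overline{\mathcal{P}}_{a+}$ framework (your final paragraph is nearer the mark than your middle one). So your sketch identifies the correct skeleton but conflates two distinct toolkits at the key technical step; to make it a proof you would need to either carry out the Lieb--Sokal multiplier argument for $\Omega_N^+$ directly, or justify carefully how an Asano-type contraction can be adapted to the cone geometry and to non-polynomial Boltzmann factors.
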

The ``Lee-Yang'' region $\Omega_N^+$ in Theorem \ref{T4}, where the partition function $Z_{\beta, \Lambda}\big(\{\boldsymbol{h}_x \}_{x \in\Lambda} \big)$ is non-zero, can usually be further enlarged; see for example \cite[Proposition 4.1]{LiSo1}. The domain \eqref{Zero_free_domain} considered here will suffice for our purposes. In particular, the possibility to include small but non-zero transverse fields $h_x^i$, $i=2,\dots,N$, $x \in \Lambda$, will prove useful in what follows; see Section \ref{analyticity}.

For plane rotator models, i.e., $N=2$, Theorem \ref{T4} is ``optimal'' in the sense that $J_{xy}^1 \geq |J_{xy}^2|$ is a sensible generalization of \eqref{IsFerConds}. For $N \geq 3$, the above result is not entirely satisfactory. Indeed, the constraint \eqref{NBadConds} is a ferromagnetic condition that forces the interaction to be anisotropic and hence entails an explicit breaking of the $O(N)$-symmetry. Ideally, one would like to replace that condition by the more natural constraint
\begin{equation} \label{NGoodConds}
J_{xy}^1 \geq \max_{2 \leq i \leq N} |J_{xy}^i|.
\end{equation} 
This is possible for the classical \textit{Heisenberg} model ($N=3$ and $\mu_0$ the uniform distribution on the unit sphere). See Proposition \ref{PclassHeis} below. 

\subsection*{Quantum spins}

\noindent Quantum lattice systems for which the Lee-Yang theorem is known to hold include rather general (anisotropic) Heisenberg models with suitable ferromagnetic pair interactions. Although we are primarily concerned with models of classical spins, we quote some results concerning such models for the sake of completeness and because they imply the improved result for the classical Heisenberg model mentioned above. We fix a region $\Lambda \subset \subset \mathbb{Z}^d$ and a spin $s \in \mathbb{N}/2$. By $\hat{\sigma}_x^i$, $x \in \Lambda$, $i=1,2,3$, we denote the $i$-th component of the quantum-mechanical spin operator at site $x$, in the spin-$s$ representation of $su(2)$. The ``hat'' distinguishes these operators from classical spins. They act on the space $ (\mathbb{C}^{2s+1})^{\otimes \vert\Lambda \vert}$ and satisfy the usual commutation relations $[\hat{\sigma}_x^j,\hat{\sigma}_y^k] = i \delta_{x,y} \epsilon_{jkl}\hat{\sigma}_x^l $ of generators of $su(2)$. We consider the Hamiltonian
\begin{equation} \label{H4}
\widehat{H}_{\Lambda,s} = - \sum_{ \{ x , y \} \subset \Lambda} \sum_{i=1}^3 J_{xy}^i \hat{\sigma}_x^i \hat{\sigma}_y^i \ -  \ \sum_{x\in \Lambda} \sum_{i=1}^3 h_x^i\hat{\sigma}_x^i,
\end{equation}
for arbitrary $s \in \mathbb{N}/2$, and define the partition function by
\begin{equation} \label{Zquant}
Q_{\beta,\Lambda,s} \big( \{  \boldsymbol{h}_x \}_{x \in \Lambda}  \big) = (2s+1)^{-|\Lambda|} \cdot \mathrm{tr} \big[ \exp \big(- \beta \widehat{H}_{\Lambda,s}( \{ \boldsymbol{h}_x \}_{x \in \Lambda}) \big) \big], \qquad \beta >0.
\end{equation}
The following theorem is essentially due to Asano \cite{As11} (see also \cite{SuFi1}) and \cite[Theorem 3]{Dun1}, where \textit{complex} transverse magnetic fields are considered. 

\begin{theorem} \label{T5}
Assume that the couplings $J_{xy}^i$ in \eqref{H4} satisfy the ferromagnetic conditions $J_{xy}^1 \geq |J_{xy}^2| $ and $J_{xy}^1 \geq |J_{xy}^3|$, for all $x,y \in \Lambda$. Then, for arbitrary $s \in \mathbb{N}/2$ and $\beta >0$, the partition function $Q_{\beta,\Lambda,s} \big( \{  \boldsymbol{h}_x \}_{x \in \Lambda}  \big)$ in \eqref{Zquant} is non-zero whenever $  \boldsymbol{h}_x \in \Omega_3^+$ $($see \eqref{Zero_free_domain}$)$, for all $x \in \Lambda$.
\end{theorem}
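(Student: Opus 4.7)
I would induct on the number of non-zero bonds in the Hamiltonian \eqref{H4}, adapting Asano's contraction method \cite{As11} to the quantum, operator-valued setting. For the \emph{base case} (no bonds), $Q_{\beta,\Lambda,s}$ factorises over sites; diagonalising $\beta\boldsymbol{h}\cdot\hat{\boldsymbol{\sigma}}$ via a complex $SO(3,\mathbb{C})$-rotation (possible whenever $\boldsymbol{h}^2:=\sum_i(h^i)^2\neq 0$) yields
\begin{equation*}
(2s+1)\,Q_{\beta,1,s}(\boldsymbol{h})\;=\;\sum_{m=-s}^{s}e^{\beta m\sqrt{\boldsymbol{h}^2}}\;=\;\frac{\sinh\bigl((s+\tfrac12)\beta\sqrt{\boldsymbol{h}^2}\bigr)}{\sinh\bigl(\tfrac12\beta\sqrt{\boldsymbol{h}^2}\bigr)},
\end{equation*}
entire in $\boldsymbol{h}$, with zeros confined to $\boldsymbol{h}^2\in(-\infty,0)$. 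Writing $\boldsymbol{h}=\boldsymbol{\alpha}+i\boldsymbol{\beta}$ with $\boldsymbol{\alpha},\boldsymbol{\beta}\in\mathbb{R}^3$, the condition $\boldsymbol{h}^2\leq 0$ amounts to $\boldsymbol{\alpha}\cdot\boldsymbol{\beta}=0$ together with $\|\boldsymbol{\alpha}\|\leq\|\boldsymbol{\beta}\|$. On $\Omega_3^+$, however, $|\beta^2|+|\beta^3|\leq|h^2|+|h^3|<\alpha^1$, and eliminating $\beta^1$ via the orthogonality relation and applying Cauchy--Schwarz gives $\|\boldsymbol{\beta}\|^2\leq ((\beta^2)^2+(\beta^3)^2)\,\|\boldsymbol{\alpha}\|^2/(\alpha^1)^2<\|\boldsymbol{\alpha}\|^2$, a contradiction.

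For the \emph{inductive step}, assume the claim for all Hamiltonians of the form \eqref{H4} with strictly fewer non-zero couplings and consider one with a coupling on the bond $\{x,y\}$. Split the sites $x,y$ into independent pairs $(x,x')$ and $(y,y')$: delete the bond $\{x,y\}$ from $\Lambda$ and install a fresh copy on $\{x',y'\}$ with the same $(J^1,J^2,J^3)$, endowing $x',y'$ with independent fields $\boldsymbol{h}_{x'},\boldsymbol{h}_{y'}\in\Omega_3^+$. The enlarged system factorises into (original $\Lambda$ with the bond removed) $\otimes$ (isolated two-site single-bond system on $\{x',y'\}$); the first factor is covered by the inductive hypothesis, and the second by a two-site computation on $\mathbb{C}^{2s+1}\otimes\mathbb{C}^{2s+1}$ (e.g.\ via the Clebsch--Gordan decomposition $V_s\otimes V_s=\bigoplus_{k=0}^{2s}V_k$ in the isotropic case combined with a monotonicity/contraction argument to introduce the allowed anisotropy, or via reduction to $s=\tfrac12$ through the embedding $V_s\hookrightarrow(\mathbb{C}^2)^{\otimes 2s}$ and direct diagonalisation of a $4\times 4$ matrix). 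Hence the enlarged partition function does not vanish on $(\Omega_3^+)^{|\Lambda|+2}$, and an operator-valued Asano contraction in the tensor factors attached to $(x,x')$ and $(y,y')$ identifies the copies, recovering the original Hamiltonian and preserving zero-freeness on $\Omega_3^+$.

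The main obstacle is implementing the Asano contraction in the quantum, non-commutative setting: bond operators $\exp(\beta\sum_iJ^i_{xy}\hat{\sigma}_x^i\hat{\sigma}_y^i)$ sharing a common site fail to commute, so Asano's classical manipulation of multi-affine polynomials in commuting activities $z_x=e^{-2\beta h_x}$ must be replaced by a trace-level identity that exploits cyclicity of the trace and the tensor-product structure of $(\mathbb{C}^{2s+1})^{\otimes|\Lambda|}$. The correct contraction should appear as a partial trace over the extra tensor factors of the enlarged system and should preserve non-vanishing on the whole of $\Omega_3^+$ rather than on a strictly smaller region, the latter being controlled by the convex-cone property of $\Omega_3^+$. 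A secondary technical point is the two-site spin-$s$ base case for arbitrary $s\in\mathbb{N}/2$: this is ultimately the computation that singles out $\Omega_3^+$ as the Lee-Yang region for the full theorem.
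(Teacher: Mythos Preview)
The paper does not prove Theorem~\ref{T5}; it is quoted from the literature (Asano \cite{As11}, Suzuki--Fisher \cite{SuFi1}, Dunlop \cite{Dun1}), so there is no in-paper argument to compare against. Your overall strategy---build up the Hamiltonian bond by bond via an Asano-type contraction---is indeed the route taken in those references, and your single-site base case is carried out cleanly (the computation showing that $\boldsymbol{h}^2\in(-\infty,0]$ is incompatible with $\boldsymbol{h}\in\Omega_3^+$ is correct).

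However, the proposal has two genuine gaps that you yourself flag but do not close. First, the ``operator-valued Asano contraction'' you invoke is not a standard tool, and your description (a partial trace over extra tensor factors, relying on cyclicity) is not an argument: identifying two tensor factors of $\mathbb{C}^{2s+1}$ does not obviously preserve zero-freeness of a trace. In Asano's actual proof the non-commutativity is handled by the Lie--Trotter product formula, which expresses $Q_{\beta,\Lambda,s}$ as a limit of \emph{classical} partition functions (multi-affine polynomials in commuting activities on a larger index set), to which the ordinary Asano contraction and Hurwitz' theorem apply; this mechanism is absent from your sketch. Second, the two-site single-bond Lee--Yang statement---the step that actually uses the hypotheses $J^1_{xy}\ge|J^2_{xy}|$, $J^1_{xy}\ge|J^3_{xy}|$ and singles out $\Omega_3^+$---is only gestured at (``Clebsch--Gordan\dots or reduction to $s=\tfrac12$''), not proved. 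Until the Trotter reduction (or a bona fide operator-level contraction lemma) and the two-site computation are supplied, the inductive step is incomplete.
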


For results somewhat more general than Theorem \ref{T5} see the references given above. Our formulation is tailored to our purposes. In particular, the following result is a corollary of Theorem \ref{T5}.

\begin{proposition} \label{PclassHeis}
Let $H_\Lambda$ be the Hamilton function defined in \eqref{H3}, with $N=3$, and let $\mu_0$ be the uniform measure on the unit sphere $S^2$. If the ferromagnetic conditions $J_{xy}^1 \geq |J_{xy}^2|$ and $J_{xy}^1 \geq |J_{xy}^3|$, for all $x,y \in \Lambda$, are satisfied, the partition function $Z_{\beta, \Lambda} ( \{ \boldsymbol{h}_x \}_{x \in \Lambda})$, $\beta >0$, does not vanish whenever $\boldsymbol{h}_x \in \Omega_3^+$, for all $x \in \Lambda.$
\end{proposition}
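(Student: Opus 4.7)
The plan is to deduce this from Theorem \ref{T5} via the classical limit of quantum spin-$s$ systems, in the spirit of Lieb's theorem on the classical limit of quantum spin systems. Concretely, I would consider the rescaled quantum partition function
\begin{equation*}
Q_s(\{\boldsymbol{h}_x\}) := Q_{\beta, \Lambda, s}\big( \{ \boldsymbol{h}_x / s \}_{x \in \Lambda} \big) \quad \text{with couplings } J_{xy}^i/s^2 \text{ in } \eqref{H4},
\end{equation*}
and argue that $Q_s \to Z_{\beta, \Lambda}$ (the classical Heisenberg partition function with original couplings $J_{xy}^i$ and fields $\boldsymbol{h}_x$) as $s \to \infty$.

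First I would observe that the ferromagnetic hypotheses $J_{xy}^1 \geq |J_{xy}^i|$, $i=2,3$, are invariant under the scaling $J_{xy}^i \mapsto J_{xy}^i/s^2$, and that $\Omega_3^+$ is a (complex) cone, so $\boldsymbol{h}_x \in \Omega_3^+$ implies $\boldsymbol{h}_x/s \in \Omega_3^+$. Hence Theorem \ref{T5} applies verbatim and yields $Q_s(\{\boldsymbol{h}_x\}) \neq 0$ for every $s \in \mathbb{N}/2$, whenever $\boldsymbol{h}_x \in \Omega_3^+$ for all $x \in \Lambda$.

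Next I would establish that $Q_s \to Z_{\beta, \Lambda}$ locally uniformly on $\mathbb{C}^{3|\Lambda|}$. For real $\boldsymbol{h}_x$ this is the classical limit of quantum spin systems, provable via the Berezin--Lieb sandwich of the (normalized) trace of the Boltzmann factor between classical integrals against the upper and lower symbols of the rescaled Hamiltonian, both of which converge to the classical Heisenberg Hamiltonian on $(S^2)^{|\Lambda|}$ as $s \to \infty$. To pass to complex $\boldsymbol{h}$, I would note that $Q_s$ and $Z_{\beta, \Lambda}$ are entire in $\boldsymbol{h}$, and produce a uniform bound via the operator-norm estimate
\begin{equation*}
\bigl\| \widehat{H}_{\Lambda,s}\bigl( \{ \boldsymbol{h}_x / s \}\bigr) \bigr\| \;\leq\; \sum_{\{x,y\} \subset \Lambda} \sum_{i=1}^3 \frac{|J_{xy}^i|}{s^2}\, \|\hat{\sigma}_x^i\|\,\|\hat{\sigma}_y^i\| \;+\; \sum_{x \in \Lambda} \sum_{i=1}^3 \frac{|h_x^i|}{s}\, \|\hat{\sigma}_x^i\|,
\end{equation*}
which is bounded uniformly in $s$ because $\|\hat{\sigma}_x^i\| = s$, so that the $1/s^2$ and $1/s$ rescalings exactly cancel. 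This gives $|Q_s(\boldsymbol{h})| \leq e^{\beta C}$ locally uniformly in $\boldsymbol{h}$, and Vitali's theorem then promotes the pointwise convergence on the real locus to locally uniform convergence on $\mathbb{C}^{3|\Lambda|}$.

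Having a sequence of nowhere-vanishing holomorphic functions $Q_s$ on the connected open set $(\Omega_3^+)^{|\Lambda|}$ converging locally uniformly to $Z_{\beta, \Lambda}$, Hurwitz's theorem yields that $Z_{\beta, \Lambda}$ is either identically zero or nowhere zero on $(\Omega_3^+)^{|\Lambda|}$. Since $Z_{\beta, \Lambda}$ is manifestly strictly positive at real $\boldsymbol{h}_x \in \Omega_3^+ \cap \mathbb{R}^3$ (integral of a positive exponential over the compact space $(S^2)^{|\Lambda|}$), the first alternative is excluded, yielding the claim. The main obstacle is the extension of the classical limit to \emph{complex} magnetic fields, which is not covered by the usual Berezin--Lieb sandwich; the Vitali-plus-uniform-norm-bound argument above is the cleanest way around this, and it relies crucially on the fact that the $1/s$ and $1/s^2$ rescalings exactly offset the linear growth of $\|\hat{\sigma}_x^i\|$ with $s$.
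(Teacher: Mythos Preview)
Your proposal is correct and follows the same strategy as the paper: take the classical limit of the rescaled quantum partition functions (Lieb's theorem), combine the uniform operator-norm bound with Vitali to extend the convergence to complex fields, and then invoke Hurwitz together with Theorem~\ref{T5} to conclude. The only difference is packaging: you apply the several-variable versions of Vitali and Hurwitz directly on $(\Omega_3^+)^{|\Lambda|}$, whereas the paper iterates the one-variable theorems coordinate by coordinate via an induction $(H_k)$ over sites (and within each site over the three field components), which has the advantage of staying strictly within the single-variable results cited from Titchmarsh and Ahlfors.
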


\begin{proof}
We define $Q_{\beta,\Lambda,s}^{(resc)} \big( \{ \boldsymbol{h}_x \}_{x \in \Lambda}  \big)$ to be the partition function corresponding to the Hamiltonian $\widehat{H}_{\Lambda,s}^{(resc)}$ defined as in \eqref{H4}, but with rescaled coefficients, $J_{xy}^i \mapsto J_{xy}^i/s^2$ and $h_x^i \mapsto h_x^i/s$. From a general result on classical limits of quantum spin systems due to Lieb \cite{Li1}, it follows that 
\begin{equation}\label{Lieb_class_limit}
\lim_{s \to \infty}Q_{\beta,\Lambda,s}^{(resc)} \big( \{ \boldsymbol{h}_x \}_{x \in \Lambda}  \big) = Z_{\beta, \Lambda} \big( \{ \boldsymbol{h}_x \}_{x \in \Lambda}\big), \text{ pointwise, for real $h_x^i$, $i=1,2,3$ and $x \in \Lambda$}.
\end{equation}
 Moreover, since $\big| Q_{\beta,\Lambda,s}^{(resc)} \big( \{ \boldsymbol{h}_x \}_{x \in \Lambda}  \big) \big|  \leq \exp \big( \beta \norm{\widehat{H}_{\Lambda,s}^{(resc)}} \big)$ (where $\norm{\cdot}$ denotes the operator norm) and $\norm{\widehat{H}_{\Lambda,s}^{(resc)}}$ is bounded uniformly in $s$, one sees that, for arbitrary compact subsets $K \subset \mathbb{C}^{3|\Lambda|}$,
\begin{equation} \label{Q_unif_bounds} 
\sup_{s \in \mathbb{N}/2; \ \{ \mathbf{h}_x \}_{x \in \Lambda} \in K} \big| Q_{\beta,\Lambda,s}^{(resc)} \big( \{ \boldsymbol{h}_x \}_{x \in \Lambda}  \big)\big| < \infty. 
\end{equation}
Assuming that $\Lambda=\{x_1,\dots, x_n\}$, we write $\boldsymbol{h_i} \equiv \boldsymbol{h_{x_i}}$ and consider the hypothesis
\begin{align*}
(H_k): \quad 
&\text{$ \lim_{s \to \infty }Q^{(resc)}_{\beta,\Lambda,s} \big( \boldsymbol{h}_{k} ; \; \{ \boldsymbol{h}_j \}_{j \neq k} \big)=Z_{\beta,\Lambda} \big( \boldsymbol{h}_{k} ; \; \{ \boldsymbol{h}_j \}_{j \neq k} \big)$, pointwise, and  $Z_{\beta,\Lambda} \big( \boldsymbol{h}_{k} ; \; \{ \boldsymbol{h}_j \}_{j \neq k} \big) \neq 0$,} \\
&\text{ whenever $\boldsymbol{h}_k \in \Omega_3^+$, for arbitrary (fixed) $\boldsymbol{h}_j \in \Omega_3^+$, $j<k$, and $\boldsymbol{h}_j \in \mathbb{R}^3$, $j>k$},
\end{align*}
for $k=1,\dots,n$. We begin by showing $(H_1)$. To this end, we first define the sequence of functions $\big(f^1_{s,1} \big)_s$ with $f_{s,1}^1(h_1^1)=Q_{\beta,\Lambda,s}^{(resc)} \big( h_1^1 ; \; h_1^2, h_1^3, \{ \boldsymbol{h}_j \}_{j \neq 1} \big)$, for arbitrary (fixed) $h_{1}^2,h_{1}^3 \in \mathbb{R}$ and $\boldsymbol{h}_j \in \mathbb{R}^3$, $j>1$. By \eqref{Lieb_class_limit}, $f_{s,1}^1$ converges pointwise for real $h_1^1$ as $s \to \infty$, and by \eqref{Q_unif_bounds}, it is uniformly bounded on compact subsets of $\mathbb{C}$. It thus follows by Vitali's Theorem \cite[Section 5.21]{Titch1} that $\lim_{s \to \infty} f_{s,1}^1(h_1^1)=Z_{\beta,\Lambda} \big( h_1^1 ; \; h_1^2, h_1^3, \{ \boldsymbol{h}_j \}_{j \neq 1} \big)$, uniformly on compact subsets of $\mathbb{C}$. Moreover, since $f_{s,1}^1$ does not vanish in $\mathbb{H}_+$ by Theorem \ref{T5}, Hurwitz' Theorem \cite[p. 178]{Ahlf1} implies that $Z_{\beta,\Lambda} \big( h_1^1 ; \; h_1^2, h_1^3, \{ \boldsymbol{h}_j \}_{j \neq 1} \big) \neq 0$ for $h_1^1 \in \mathbb{H}_+$.

Next, we consider the functions $f_{s,1}^2(h_1^2)=Q_{\beta,\Lambda,s}^{(resc)} \big( h_1^2 ; \; h_1^1, h_1^3, \{ \boldsymbol{h}_j \}_{j \neq 1} \big)$, for arbitrary (fixed) $h_1^1 \in \mathbb{H}_+$, $h_{1}^3 \in \mathbb{R}$ and $\boldsymbol{h}_j \in \mathbb{R}^3$, $j>1$. By what we have just shown, $f_{s,1}^2 (h_1^2)$ converges as $s \to \infty$ for all $h_1^2 \in \mathbb{R}$ and the limit is non-zero. Hence, Vitali's Theorem, together with the bounds \eqref{Q_unif_bounds}, yields that $\lim_{s \to \infty} f_{s,1}^2(h_1^2)=Z_{\beta,\Lambda} \big( h_1^2 ; \; h_1^1, h_1^3, \{ \boldsymbol{h}_j \}_{j \neq 1} \big)$, uniformly on compact subsets of $\mathbb{C}$; and, using Hurwitz' Theorem together with Theorem \ref{T5}, one shows that this limit does not vanish for $h_1^2 \in \{z \in \mathbb{C} : \; |z| < \mathrm{Re} \ \! h_1^1 \}$. Repeating this argument for the sequence $f_{s,1}^3(h_1^3) =  Q_{\beta,\Lambda,s}^{(resc)} \big( h_1^3 ; \; h_1^1, h_1^2, \{ \boldsymbol{h}_j \}_{j \neq 1} \big)$, with arbitrary $h_1^1 \in \mathbb{H}_+$, $h_{1}^2 \in \{z \in \mathbb{C} : \; |z| < \mathrm{Re} \ \! h_1^1 \}$ and $\boldsymbol{h}_j \in \mathbb{R}^3$, $j>1$, one obtains $(H_1)$.

In order to complete the proof of Proposition \ref{PclassHeis}, it suffices to show that $(H_{k-1})$ implies $(H_{k})$, $k=2,\dots,n$. The proof is completely analogous to the one of $(H_1)$ (one considers subsequently three sequences of functions $\big(f_{s,k}^i \big)_s$, for $i=1,2,3$). The only difference is that one invokes $(H_{k-1})$ instead of \eqref{Lieb_class_limit} to argue that the functions $f_{s,k}^1(h_k^1)=Q_{\beta,\Lambda,s}^{(resc)} \big( h_k^1 ; \; h_k^2, h_1^3, \{ \boldsymbol{h}_j \}_{j \neq k} \big)$, (for fixed $\boldsymbol{h}_j \in \Omega_3^+$, $j<k$, $h_k^2,h_k^3 \in \mathbb{R}$, and $\boldsymbol{h}_j \in \mathbb{R}^3$, $j>k$), converge pointwise, for \textit{real} $h_k^1$,  towards a non-zero limit, as $s \to \infty$.
\end{proof}

\section{Analyticity of correlations} \label{analyticity}

Our aim in this section is to establish analyticity of (connected) correlations as functions of an external magnetic field $h$ in the Lee-Yang regions $\mathbb{H}_\pm = \{ h \in \mathbb{C}  :  \pm \mathrm{Re} \ \! h > 0\}$, for the models introduced in the last section. For the sake of clarity, we first focus on models of Ising spins. Extensions to other models satisfying the Lee-Yang theorem are outlined at the end of this section. Thus, we consider the Ising Hamiltonian $H_\Lambda$ defined in \eqref{His}, but with a homogenous magnetic field $h_x = h$ for all $x$, and we write $H^{0}_\Lambda= {H_\Lambda}{\big\vert_{h=0}}$. For the purposes of this section, we further impose \textit{periodic} boundary conditions and require the pair interaction couplings to satisfy
\begin{equation}\label{Jconds}
J_{xy}=J_{yx}=J_{x-y,0} \geq 0, \quad \text{for all } x \neq y, \quad \text{and} \quad \sum_{x\neq 0}J_{0x}< \infty.
\end{equation}
The a-priori measure $\mu_0$ on $\mathbb{R}$ is assumed to be any even measure with \textit{compact} support satisfying condition \eqref{zerointsLYcond} (e.g., $\mu_0 =(\delta_{-1}+\delta_1)/2$, the example of Ising spins, or $\mu_0 = \lambda \big\vert_{[-1,1]} /2$, the example of a continuous spin uniformly distributed on $[-1,1]$.) The partition function at inverse temperature $\beta > 0$ is denoted by $Z_{ \beta, \Lambda}(h)$.
Theorem \ref{T1} holds for this class of models.

We are interested in  the \textit{connected} correlation functions $\langle \sigma_{x_1}  ;  \! \! \  \dots \! \! \  ;  \sigma_{x_n} \rangle^c_{\Lambda, \beta, h}$ (also called \textit{Ursell functions}, or \textit{cumulants} in a probability theory context), where $n \in \mathbb{N}$ and $\{x_1, \dots, x_n \} \subset \Lambda$, which can be defined as \cite[Section II.12]{Si1}
\begin{equation} \label{corrs}
\langle \sigma_{x_1}  ; \! \! \  \dots \! \! \  ;  \sigma_{x_n} \rangle^c_{\Lambda, \beta, h} = \Big[ \frac{\partial^n}{\partial \varepsilon_1 \cdots \partial \varepsilon_n} \log \big( \big\langle \exp \big[ \sum_{i=1}^n \varepsilon_i \sigma_{x_i} \big] \big\rangle_{\Lambda, \beta, h} \big) \Big] \bigg\vert_{\varepsilon_1= \cdots = \varepsilon_n =0},
\end{equation}
where $\langle \ \cdot \ \rangle_{\Lambda, \beta, h}$ denotes a (finite-volume) thermal average, i.e., an expectation with respect to the probability measure $\big(Z_{ \beta, \Lambda}(h)\big)^{-1} e^{-\beta H_\Lambda(\sigma_\Lambda)} \prod_{x \in \Lambda} d\mu_0(\sigma_x)$. Our aim is to prove the following result.

\begin{proposition} \label{analytic_corrs}
Under the above assumptions, for arbitrary sites $ x_1,\dots , x_n$, $n \in \mathbb{N}$, and for all $\beta>0$, the  thermodynamic limit 
\begin{equation*}
\langle \sigma_{x_1}  ;  \! \! \ \dots \!\! \ ;  \sigma_{x_n} \rangle^c_{ \beta, h} := \lim_{\Lambda \nearrow \mathbb{Z}^{d}} \langle \sigma_{x_1}  ;   \! \! \ \dots  ;  \! \! \ \sigma_{x_n} \rangle^c_{\Lambda, \beta, h}
\end{equation*}
of the connected correlation $\langle \sigma_{x_1}  ;   \! \! \ \dots  \! \! \ ;  \sigma_{x_n} \rangle^c_{\Lambda, \beta, h}$ exists and is an analytic function of the magnetic field $h$, for $\mathrm{Re} \ \! h \neq 0$. 
\end{proposition}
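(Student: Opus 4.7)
The plan is to establish the proposition by a two-step argument: finite-volume analyticity via the Lee-Yang theorem applied with site-dependent fields, followed by the thermodynamic limit via Vitali's theorem, with uniform bounds derived from the Lee-Yang non-vanishing combined with Borel--Carath\'eodory's inequality. By \eqref{corrs}, the cumulant $\langle\sigma_{x_1};\dots;\sigma_{x_n}\rangle^c_{\Lambda,\beta,h}$ is the $n$-fold mixed partial at $\varepsilon=0$ of the generating function
\[
M_\Lambda(\varepsilon;h) := \log\big\langle\exp\big({\textstyle\sum_{i=1}^n}\varepsilon_i \sigma_{x_i}\big)\big\rangle_{\Lambda,\beta,h} = \log Z_\Lambda^{\varepsilon}(h) - \log Z_{\beta,\Lambda}(h),
\]
where $Z_\Lambda^{\varepsilon}(h)$ denotes the partition function with local magnetic field $h+\varepsilon_i/\beta$ at $x_i$ and $h$ elsewhere. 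Theorem \ref{T1} applied with site-dependent fields gives that both $Z_{\beta,\Lambda}$ and $Z_\Lambda^{\varepsilon}$ are non-vanishing on the domain $\mathcal{D} := \{(h,\varepsilon)\in \mathbb{H}_+\times\mathbb{C}^n: \mathrm{Re}(h+\varepsilon_i/\beta)>0,\ i=1,\dots,n\}$, so that $M_\Lambda$ is jointly analytic on $\mathcal{D}$; in particular, at each finite $\Lambda$ the cumulant is analytic in $h\in\mathbb{H}_+$ (and likewise on $\mathbb{H}_-$ by the symmetry $h\mapsto -h$ together with evenness of $\mu_0$).

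To pass to the thermodynamic limit, I would apply Vitali's theorem to $\{M_\Lambda\}$, which requires pointwise convergence on a set with an accumulation point, and uniform boundedness on compacta of $\mathcal{D}$. Pointwise convergence on the positive real axis $(0,\infty)$ follows from standard arguments: for real $h$ and real $\varepsilon$, the Gibbs weight defines a genuine probability measure, and convergence of all cumulants follows from the existence of a unique translation-invariant Gibbs state at $h>0$ for ferromagnetic Ising-type systems with periodic boundary conditions (via, e.g., GKS), or, equivalently, from \eqref{TDfclass} applied to the augmented potential.

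For uniform boundedness, the preliminary step is the bound on the free energy density $g_\Lambda(h) := |\Lambda|^{-1}\log Z_{\beta,\Lambda}(h)$ on compact subsets of $\mathbb{H}_+$. Direct estimation gives $\mathrm{Re}\, g_\Lambda(h) = |\Lambda|^{-1}\log |Z_{\beta,\Lambda}(h)| \leq c(1+|h|)$, using that $\mathrm{supp}(\mu_0)$ is compact and that $\sum_{x\neq 0}J_{0x}<\infty$ by \eqref{Jconds}; and at any real $h_0>0$, $g_\Lambda(h_0)$ is real and uniformly bounded. Applying the Borel--Carath\'eodory inequality on disks $\{|h-h_0|<R\}\subset\mathbb{H}_+$ then yields a uniform bound $|g_\Lambda|\leq c_K$ on any compact $K\subset\mathbb{H}_+$, and Vitali gives $g_\Lambda\to g$ uniformly on compacta of $\mathbb{H}_+$. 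The same argument applied to $g_\Lambda^{\varepsilon}(h) := |\Lambda|^{-1}\log Z_\Lambda^{\varepsilon}(h)$ gives a uniform bound on compacta of $\mathcal{D}$.

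The hard step --- and main obstacle --- is to upgrade this to uniform boundedness of $M_\Lambda = |\Lambda|(g_\Lambda^{\varepsilon}-g_\Lambda)$ itself on compacta of $\mathcal{D}$: the na\"ive combination only yields $|M_\Lambda|=O(|\Lambda|)$, whereas one needs $O(1)$ since only $n$ fields are perturbed. On the positive real axis this is automatic, since the probabilistic bound $|\langle\exp(\sum\varepsilon_i\sigma_{x_i})\rangle|\leq e^{R\sum|\varepsilon_i|}$ immediately gives $|M_\Lambda(\varepsilon;h)|\leq R\sum_i|\varepsilon_i|$ uniformly in $\Lambda$. Lifting this bound off the real axis to complex $h\in\mathbb{H}_+$, in a way that avoids the $O(|\Lambda|)$ penalty of elementary estimates, is the crux of the argument and relies crucially on the Lee-Yang non-vanishing of $Z_\Lambda^{\varepsilon}$ throughout $\mathcal{D}$: since $M_\Lambda$ is analytic on $\mathcal{D}$ and bounded on the real slice independently of $\Lambda$, one can control it on compacta of $\mathcal{D}$ via a further Borel--Carath\'eodory / normal-family argument. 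With this uniform bound in hand, Vitali yields uniform convergence of $M_\Lambda$ on compacta of $\mathcal{D}$, and Cauchy's formula then delivers uniform convergence of the cumulants $\partial^n_{\varepsilon}M_\Lambda|_{\varepsilon=0}$ on compacta of $\mathbb{H}_+$, together with the claimed analyticity of the thermodynamic limit.
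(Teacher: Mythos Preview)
Your plan identifies the right obstacle but does not actually overcome it. The claim that the real-axis bound $|M_\Lambda(\varepsilon;h)|\leq R\sum_i|\varepsilon_i|$ can be lifted to complex $h\in\mathbb{H}_+$ ``via a further Borel--Carath\'eodory / normal-family argument'' is the gap. Borel--Carath\'eodory requires an upper bound on $\mathrm{Re}\,M_\Lambda$ on a larger disk, not a bound on $|M_\Lambda|$ along a line; and the only available upper bound is $\mathrm{Re}\,M_\Lambda(\varepsilon;h)=\log|Z_\Lambda^\varepsilon(h)|-\log|Z_\Lambda(h)|\leq R\sum_i|\varepsilon_i|+\log Z_\Lambda(\mathrm{Re}\,h)-\log|Z_\Lambda(h)|$, where the last difference equals $|\Lambda|\big(g_\Lambda(\mathrm{Re}\,h)-\mathrm{Re}\,g_\Lambda(h)\big)$ and is generically of order $|\Lambda|$. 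So Borel--Carath\'eodory only reproduces the $O(|\Lambda|)$ bound you already had. Nor does a normal-family argument help: boundedness on the real slice alone does not force equiboundedness on compacta of $\mathcal{D}$ (think of $e^{i|\Lambda|(h-h_0)}$). The Lee--Yang non-vanishing of $Z_\Lambda^\varepsilon$ guarantees that $M_\Lambda$ is \emph{well-defined} and analytic, but gives no quantitative lower bound on $|Z_\Lambda(h)|$ that is uniform in $\Lambda$.

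The paper circumvents this by not perturbing at the individual sites $x_1,\dots,x_n$ at all. Instead it introduces \emph{translation-invariant} source terms $\sum_{x\in\Lambda}\varepsilon_\alpha e^{ik_\alpha\cdot x}\sigma_x$ (Fourier modes, exploiting the periodic boundary conditions). The perturbation is now extensive, so the relevant object is the free-energy \emph{density} $f_\Lambda(h,\boldsymbol\varepsilon)=|\Lambda|^{-1}\log Z_\Lambda(h,\boldsymbol\varepsilon)$, for which the trivial estimate on $\mathcal{R}_\Lambda=e^{f_\Lambda}=Z_\Lambda^{1/|\Lambda|}$ already gives uniform bounds on compacta of $D=\{\mathrm{Re}\,h>\sum_\alpha|\varepsilon_\alpha|\}$ --- no Borel--Carath\'eodory needed. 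Vitali (using a large-field cluster expansion for pointwise convergence) and Hurwitz then yield an analytic, non-vanishing limit $\mathcal{R}_\infty$, hence an analytic $f_\infty=\log\mathcal{R}_\infty$. The point is that $\partial^n f_\Lambda/\partial\varepsilon_1\cdots\partial\varepsilon_n|_{\boldsymbol\varepsilon=0}=|\Lambda|^{-1}\langle\hat\sigma_{k_1};\dots;\hat\sigma_{k_n}\rangle^c_{\Lambda,h}$ is an \emph{intensive} quantity (translation invariance forces $\sum_\alpha k_\alpha=[0]$), so its thermodynamic limit and analyticity follow directly from those of $f_\infty$; one then Fourier-inverts over the compact momentum torus to recover the position-space cumulants. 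This Fourier-mode device is precisely what resolves the $O(|\Lambda|)$-vs-$O(1)$ mismatch that your localized-perturbation approach runs into.
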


\begin{proof}
Since $\beta >0$ can be absorbed into a redefinition of $J_{xy}$ and $h$, there is no loss of generality in setting $\beta=1$. We consider the modified partition function
\begin{equation} \label{modZ}
Z_{ \Lambda} (h,\varepsilon_1, \dots, \varepsilon_n ) := \int \exp \Big\{ -  H^{0}_{\Lambda}(\sigma_\Lambda) + \sum_{x \in \Lambda} \Big[ h + \sum_{\alpha=1}^n \varepsilon_\alpha e^{ik_\alpha \cdot x} \Big] \sigma_x \Big\} \prod_{x \in \Lambda}d \mu_0(\sigma_x), 
\end{equation}
with $\boldsymbol\varepsilon=(\varepsilon_1, \dots, \varepsilon_n)\in \mathbb{C}^n$ and $k_\alpha \in \Lambda^*$, for all $\alpha=1,\dots,n$, where $\Lambda^*$ denotes the dual lattice of $\Lambda$. Assuming that $\mathrm{Re} ( h ) >0$, setting $h_x := h + \sum_{\alpha=1}^n \varepsilon_\alpha e^{ik_\alpha \cdot x}$ and noting that $\mathrm{Re} ( h_x ) \geq \mathrm{Re} ( h ) - \sum_{\alpha=1}^n | \varepsilon_\alpha |$, Theorem \ref{T1} is seen to have  the following corollary:
\begin{equation} \label{LYcor}
Z_{ \Lambda} (h,\boldsymbol{\varepsilon}  ) \neq 0 \qquad \text{if} \qquad \mathrm{Re} (h) > \sum_{\alpha=1}^n | \varepsilon_\alpha |.
\end{equation}
More generally, $Z_{ \Lambda} (h,\boldsymbol{\varepsilon})$ does not vanish whenever $|\mathrm{Re} ( h ) | > \sum_{\alpha=1}^n | \varepsilon_\alpha |$, because $Z_{ \Lambda}(\{h_x \}_{x \in \Lambda})=Z_{ \Lambda}(\{-h_x \}_{x \in \Lambda})$, by symmetry, since $\mu_0$ is even. In the following, we may therefore assume that $\mathrm{Re} ( h ) >0$. Defining the convex domain
\begin{equation} \label{domain_D}
D= \Big\{(h,\boldsymbol\varepsilon) \in \mathbb{C}^{n+1} \ : \ \mathrm{Re} ( h ) > \sum_{\alpha=1}^n | \varepsilon_\alpha |   \Big\} \subset \mathbb{C}^{n+1} ,
\end{equation}
it is an easy exercise to check that $Z_{ \Lambda} (h,\boldsymbol{\varepsilon} )$ has an \emph{analytic} logarithm in $D$, so that 
\begin{equation} \label{fdef}
f_{ \Lambda} (h,\boldsymbol{\varepsilon}  ) :=|\Lambda|^{-1} \log Z_{ \Lambda} (h,\boldsymbol{\varepsilon}  )
\end{equation}
is a well-defined, analytic function of $(h,\boldsymbol{\varepsilon} )$ in $D$; (we choose a determination of $\log$ that is real, for $h>0$ and $\boldsymbol{\varepsilon}=0$). 
Henceforth, we fix some wave vectors $k_1, \dots, k_n$ and consider a family of rectangular domains, $\Lambda$, with the property that $\Lambda^{*} \ni k_{\alpha}$, $\forall \alpha$. We then consider the analytic function
\begin{equation} \label{defR}
\mathcal{R}_{ \Lambda} (h,\boldsymbol{\varepsilon}  ) :=  Z_{ \Lambda} (h,\boldsymbol{\varepsilon}  )^{\frac{1}{|\Lambda|}}:= \exp \big[f_{ \Lambda} (h,\boldsymbol{\varepsilon}  ) \big], \qquad \text{for $(h,\boldsymbol{\varepsilon} )\in D$.}
\end{equation}
We note that $\mathcal{R}_{ \Lambda}$ is uniformly bounded in $\Lambda$ on arbitrary compact subsets of $D$: abbreviating $ - \sum_{x \in \Lambda} \big[ h + \sum_{\alpha=1}^n \varepsilon_\alpha e^{ik_\alpha \cdot x} \big] \sigma_x $ by $H_{\Lambda}^{h, \boldsymbol\varepsilon} (\sigma_\Lambda)$, definition \eqref{modZ} implies that, for all $(h, \boldsymbol{\varepsilon}) \in D$,
\begin{align*}
|\mathcal{R}_{ \Lambda} (h,\boldsymbol{\varepsilon}  ) |= |Z_{ \Lambda} (h,\boldsymbol{\varepsilon}  ) |^{1/|\Lambda|}  \leq \big(\norm{ e^{- (H_\Lambda^0 + H_{\Lambda}^{h, \boldsymbol\varepsilon})}}_\infty\big)^{1/|\Lambda|} \ \leq \   \exp \big[  |\Lambda|^{-1} ( \norm{H^{0}_{\Lambda}}_\infty+ \norm{  H_{\Lambda}^{h, \boldsymbol\varepsilon}}_\infty) \big].
\end{align*}
Clearly,
\begin{equation*}
\norm{ H_{\Lambda}^{h, \boldsymbol\varepsilon}}_\infty  \leq c \abs{\Lambda} \cdot \Big( |h| +  \sum_{\alpha =1}^n \abs{\varepsilon_\alpha} \Big), 
\end{equation*}
for some constant $c>0$; (here, we assume for simplicity that the support of the measure $\mu_0$ is compact).  Moreover, denoting by $\Phi$ the ferromagnetic pair interaction, i.e., $\Phi(\{x,y \})= - J_{xy}\sigma_x \sigma_y$, for all $x \neq y$, $\Phi(X)=0$ else, we have (see for example \cite[Section II.3]{Si1}, and recall the definition of $\trinorm{\cdot}$ below \eqref{Z}) that
\begin{align*}
\norm{H_\Lambda^0}_\infty \leq c' \abs{\Lambda} \cdot \trinorm{\Phi},
\end{align*}
where $c'>0$ may depend on the dimension $d$ of the lattice and takes into account the periodic boundary conditions imposed at $\partial \Lambda$. Notice that this last estimate holds for arbitrary translation-invariant interactions $\Phi$. The last two estimates imply that $\abs{\mathcal{R}_{ \Lambda} (h,\boldsymbol{\varepsilon}  ))} \leq c(h,\boldsymbol{\varepsilon})$, where $c(h,\boldsymbol{\varepsilon}  )$ depends continuously on $(h, \boldsymbol{\varepsilon})$ and does not depend on $\Lambda$. Thus, for any compact $K \subset D$, 
\begin{equation} \label{Runifbounds}
\sup_{\Lambda \ \!\! ; \ \!\! (h,\boldsymbol{\varepsilon})\in K} \abs{\mathcal{R}_{ \Lambda} (h,\boldsymbol{\varepsilon}  )} < \infty.
\end{equation} 

With uniform bounds at hand, we may study the thermodynamic limit $\Lambda \nearrow \mathbb{Z}^{d}$. With the help of a cluster expansion (see for example \cite{Uel1}) at large magnetic fields, one shows the existence of the limit
\begin{equation} \label{TDf}
f_{\infty}(h, \boldsymbol{\varepsilon}) = \lim_{\Lambda \nearrow \mathbb{Z}^{d}} f_{\Lambda} (h,\boldsymbol{\varepsilon})
\end{equation}
at any point in the (large-field) regime
\begin{equation} \label{largefieldconds}
\mathrm{Re} ( h )  > h_0, \qquad \abs{\mathrm{Im}(h)} < \delta, \qquad \abs{\varepsilon_\alpha} < \delta, \quad \forall \alpha=1, \dots, n,
\end{equation}
for sufficiently large $h_0$ and small $\delta >0$; (had we kept the $\beta$-dependence explicit, $h_0$ and $\delta$ would depend on $\beta$). Similar conclusions hold for $\mathcal{R}_{\Lambda}$. Let $S$ denote the subregion of $D$ determined by the constraints \eqref{largefieldconds}. At any point $(h, \boldsymbol{\varepsilon}) \in S$, the limit $\mathcal{R}_{\infty}(h, \boldsymbol{\varepsilon}) = \lim_{\Lambda \nearrow \mathbb{Z}^{d}}\mathcal{R}_{\Lambda}(h, \boldsymbol{\varepsilon})$ exists and is finite, and $\mathcal{R}_{\infty}(h, \boldsymbol{\varepsilon})  = \exp \big[f_{\infty}(h, \boldsymbol{\varepsilon})  \big]$, by \eqref{defR}. In particular, $\mathcal{R}_{\infty} \neq 0$ on $S$. Moreover, since the subregion $S \subset D$ is a determining set for $D$, and since the functions $\mathcal{R}_{\Lambda}$ are uniformly bounded on compact subsets $K$ of $D$, by \eqref{Runifbounds}, it follows from Vitali's Theorem \cite[Section 5.21]{Titch1} that $\mathcal{R}_{\Lambda}$ converges everywhere in $D$, as $\Lambda \nearrow \mathbb{Z}^{d}$, uniformly on any such $K$, and from Weierstrass' Theorem \cite[p. 176]{Ahlf1} that the limit $\mathcal{R}_{\infty}$ is \textit{analytic} in $D$. 

Next, we show that $\mathcal{R}_{\infty}$ vanishes nowhere on $D$. To this end, we first consider the functions $\mathcal{R}_{\Lambda}(h,\boldsymbol{0}) = \exp [f_{ \Lambda}(h,\boldsymbol{0})]$; $f_{ \Lambda}(h,\boldsymbol{0})$ is known to have a (pointwise) limit for \textit{real} $h$ \cite[Section II.3]{Si1}. 
Hence $\mathcal{R}_{\infty} (h,\boldsymbol{0}) \neq 0$, for arbitrary real $h>0$. 
It follows from Hurwitz' Theorem \cite[p. 178]{Ahlf1} that $\mathcal{R}_{\Lambda}(h,\boldsymbol{0}) \neq 0$, for all $h \in \mathbb{H}_+$. Let $(h, \boldsymbol{\varepsilon})$ be any point in $D$. We show that $\mathcal{R}_{ \infty}(h,\boldsymbol{\varepsilon}) \neq 0$ by repeated application of Hurwitz' Theorem: 
Define $g_\Lambda^1(z)=\mathcal{R}_{ \Lambda} (h,z,0, \dots, 0)$, which is analytic and converges uniformly on compact subsets of $D_1 := \{z \in \mathbb{C}  :   \abs{z} < \mathrm{Re} ( h )  \}$ towards $g_\infty^1(z):=\mathcal{R}_{ \infty}(h,z,0, \dots, 0)$. But $g_\infty^1(0) \neq 0$, hence by Hurwitz' theorem, $g_\infty^1(z)$ vanishes nowhere in $D_1$. In particular, $g_\infty^1(\varepsilon_1) = \mathcal{R}_{\infty}(h,\varepsilon_1, 0,  \dots, 0) \neq 0$. 
Repeating this argument $n$ times (in the $k$-th step, the functions are $g_\Lambda^k(z)=\mathcal{R}_{\Lambda}(h, \varepsilon_1, \dots, \varepsilon_{k-1}, z,0, \dots, 0)$ and their domain of definition is $D_k = \{z \in \mathbb{C} : \abs{z} < \mathrm{Re} ( h ) - \sum_{i=1}^{k-1} \abs{\varepsilon_i} \}$), we obtain the desired result, namely that $\mathcal{R}_{ \infty}(h,\boldsymbol\varepsilon) \neq 0$. 

Having established that $\mathcal{R}_\infty$ is nowhere vanishing on the convex domain $D$, it follows that \begin{equation*}
f_\infty( h,\varepsilon_1,\dots, \varepsilon_n) := \log \big[ \mathcal{R}_\infty( h,\varepsilon_1,\dots, \varepsilon_n) \big]
\end{equation*}
is well-defined and analytic in $D$, and thus analytically continues $f_\infty$ previously defined in \eqref{TDf} on the subregion $S$ of $D$ determined by the constraints \eqref{largefieldconds}. Moreover, the functions $f_\Lambda$ defined in \eqref{fdef} converge towards $f_\infty$, uniformly on compact subsets of $D$. Similarly, any derivative of $f_\Lambda$ with respect to variables $(h, \boldsymbol{\varepsilon})$ converges towards the corresponding derivative of $f_\infty$, uniformly on compact subsets of $D$. This follows from Cauchy's integral formula for polydiscs. 

Next, we consider correlation functions. Let the (discrete) Fourier transform of spins on $\mathbb{Z}^{d}$ and the corresponding reverse transformation be denoted by
\begin{equation*} 
\hat{\sigma}_k =  \sum_{x \in \Lambda} e^{ik\cdot x} \sigma_x, \quad k \in \Lambda^*, \qquad \text{and} \qquad \sigma_x = \frac{1}{\abs{\Lambda}} \sum_{k \in \Lambda^*} e^{-ix\cdot k} \hat{\sigma}_k, \quad x \in \Lambda.
\end{equation*}
Observing that $Z_\Lambda(h, \boldsymbol{\varepsilon})$ in \eqref{modZ} can be rewritten as $Z_\Lambda(h, \boldsymbol{\varepsilon}) = Z_{\Lambda}(h) \cdot \big\langle e^{\sum_{\alpha=1}^n \varepsilon_\alpha \hat{\sigma}_{k_\alpha} } \big\rangle_{\Lambda,h}$, where $\langle \ \cdot \ \rangle_{\Lambda,h} = \langle \ \cdot \ \rangle_{\Lambda, \beta =1, h}$, it follows from definition \eqref{corrs} that
\begin{equation} \label{mastereqn}
 \frac{\partial^n \log \big[ Z_\Lambda(h, \boldsymbol{\varepsilon} ) \big]}{\partial \varepsilon_1 \cdots \partial \varepsilon_n}  \bigg|_{\varepsilon_1 = \dots = \varepsilon_n =0} =  \big\langle  \hat{\sigma}_{k_1}  ; \ \!\! \dots \  \!\! ;  \hat{\sigma}_{k_n} \big \rangle^c_{\Lambda , h}\bigg|_{\sum_{\alpha =1}^n k_\alpha = [0]}, 
\end{equation}
where the constraint $\sum_{\alpha =1}^n k_\alpha = [0]$ (with $[0] = a \mathbb{Z}^d$ for $\Lambda = \Lambda(a)=  \mathbb{Z}^d / a \mathbb{Z}^d $) follows from translation invariance, which holds because we have imposed periodic boundary conditions. With \eqref{fdef}, we thus obtain from \eqref{mastereqn}
\begin{equation} \label{mastereqn2}
 \frac{\partial^n  f_\Lambda(h, \boldsymbol{\varepsilon}  ) }{\partial \varepsilon_1 \cdots \partial \varepsilon_n}  \bigg|_{\varepsilon_1 = \dots = \varepsilon_n =0} = \frac{1}{|\Lambda|} \big\langle  \hat{\sigma}_{k_1}  ; \ \!\! \dots \  \!\! ;  \hat{\sigma}_{k_n} \big \rangle^c_{\Lambda, h} \bigg|_{\sum_{\alpha =1}^n k_\alpha = [0]}. 
\end{equation}
Proposition 7 then follows upon letting $\Lambda \nearrow \mathbb{Z}^{d}$, using that the derivative on the left-hand side has a well-defined limit \textit{analytic} in $h$ in the region $\mathbb{H}_+$, and, subsequently, Fourier-transforming back to position space. (This does not affect analyticity in $h$, because all integrations in $k$-space extend over a compact set.)
\end{proof}

Note that Proposition \ref{analytic_corrs} continues to hold for unbounded single-spin measures $\mu_0$ satisfying suitable decay assumptions at infinity (this requires a somewhat more careful analysis). In particular, this is of interest in applications to field theory (see the end of Section \ref{apps} below).

Next, we discuss generalizations of Proposition \ref{analytic_corrs} to other models satisfying a Lee-Yang theorem. For one-component spins, we consider, for example, the Hamiltonian \eqref{H2}, with a uniform external field $h$ turned on, and $\mu_0 =( \delta_{-1}+ \delta_1)/2$. We assume that the interaction $\Phi$ has spin-flip symmetry and that it is such that the Lee-Yang theorem holds at some inverse temperature $\beta >0$. Note that this requires $\beta$ to be sufficiently \textit{large}, depending on $\Phi$, if $\Phi$ is not the ferromagnetic Ising interaction; see the discussion following Theorem \ref{T3} above and references therein, in particular \cite{Ru4}. Then $\langle \sigma_{x_1}  ;  \! \! \ \dots \!\! \ ;  \sigma_{x_n} \rangle^c_{ \beta, h}$ is analytic in $h$ in the regions $\mathbb{H}_{\pm}$. The above proof is still applicable: the uniform bounds \eqref{Runifbounds} hold for general $\Phi$ with $\trinorm{\Phi} < \infty$, but the large-field cluster expansion, c.f. \eqref{TDf}, must be modified slightly. 

The analyticity results of Proposition 7  can also be extended to certain $N$-component models with $N \geq 2$. Consider the Hamiltonian \eqref{H3} and assume that condition \eqref{NBadConds} holds, that the a-priori measure $\mu_0$ satisfies the assumptions of Theorem \ref{T4} and, in addition, that $\text{supp}(\mu_0) \subset \mathbb{R}^N$ is compact. Then the corresponding partition function satisfies  
\begin{equation} \label{thecond}
Z_{ \beta, \Lambda} \big( \{ \boldsymbol{h}_x \}_{x\in \Lambda}\big)\neq 0, \qquad \text{if} \qquad \mathrm{Re} (h_x^1) > \sum_{i=2}^N |h_x^i|, \text{ for all } x \in \Lambda.
\end{equation}
We assume that $\boldsymbol{h}_x = (h,0,\dots,0)$, for all $x$ in \eqref{H3}. To conclude analyticity of $\big \langle \sigma_{x_1}^{i_1} ; \dots ;  \sigma_{x_n}^{i_n} \big \rangle_{ \beta, h}^c$ in $\mathbb{H}_+$, equation \eqref{modZ}  must be modified to read
\begin{equation*}
Z_{\beta, \Lambda} ( h,\varepsilon_1, \dots, \varepsilon_n ) := \int \exp \bigg\{ - \beta \Big[ H_\Lambda^0 - h \sum_{x \in \Lambda} \sigma_x^1 - \sum_{\alpha=1}^n \varepsilon_\alpha  \sum_{x \in \Lambda} e^{ik_\alpha \cdot x} \sigma_x^{i_\alpha} \Big] \bigg\} \prod_{x \in \Lambda} d \mu_0 (\boldsymbol{\sigma}_x). \end{equation*}
Note that the conditions in \eqref{thecond} are satisfied if $\sum_{\alpha=1}^n |\varepsilon_\alpha|< \mathrm{Re}(h)$, which is the same constraint as for $N=1$, c.f. \eqref{LYcor}. The partition function $Z_{\beta, \Lambda }(h,\boldsymbol\varepsilon)$ is thus non-vanishing and possesses an analytic logarithm on the region $D \subset \mathbb{C}^{n+1}$ defined by \eqref{domain_D}. Hence, our proof of Proposition \ref{analytic_corrs} applies in this case as well. In particular, the results hold when $\mu_0$ is the uniform measure on the unit sphere $S^{N-1}$, and, for  $N=3$ (classical Heisenberg model), \eqref{NBadConds} can be relaxed to the more natural ferromagnetic condition \eqref{NGoodConds}, by virtue of Proposition \ref{PclassHeis}.

We conclude this section by discussing joint analyticity properties of correlations in $h$ and $\beta$, which follow from the proof of 
Proposition \ref{analytic_corrs} above, using an idea of Lebowitz and Penrose; see \cite[Sections III and IV]{LePe1}. For simplicity, we restrict 
ourselves to the original framework of Proposition \ref{analytic_corrs}, but the following arguments apply to the other models discussed in the 
previous paragraph as well. We claim that (in the setting of Proposition \ref{analytic_corrs}),
\begin{equation} \label{joint_anal}
\text{$\langle \sigma_{x_1}  ;  \! \! \ \dots \!\! \ ;  \sigma_{x_n} \rangle^c_{ \beta, h} $ is jointly analytic in $\beta$ and $h$ for $\beta >0$ and $\mathrm{Re} \ \! h >0$},
\end{equation}
i.e., in some (complex) neighborhood of $(0, \infty) \times \mathbb{H}_+$ in $(\beta,h)$-space. Indeed, this can be seen as follows. Let 
$Z_{ \Lambda} ( \beta, h,\boldsymbol{\varepsilon})$ be the modified partition function $\eqref{modZ}$, but with a factor of $\beta$ inserted in the exponent, 
and the functions $f_{ \Lambda} ( \beta, h,\boldsymbol{\varepsilon})$ and $\mathcal{R}_{ \Lambda} ( \beta, h,\boldsymbol{\varepsilon})$ be formally defined 
in terms of $Z_{ \Lambda} ( \beta, h,\boldsymbol{\varepsilon})$ as in \eqref{fdef} and \eqref{defR}, respectively. For arbitrary $\beta_0 >0$, by virtue of a suitable
cluster expansion, one may select $\tilde{h}$ sufficiently large and $\delta = \delta (\beta_0, \tilde{h}) >0$ such that
\begin{equation} \label{MZ1}
\begin{split}
&\text{$\mathcal{R}_{ \infty} ( \beta, h,\boldsymbol{\varepsilon}) := \lim_{\Lambda \nearrow \mathbb{Z}^d} \mathcal{R}_{ \Lambda} ( \beta, h,\boldsymbol{\varepsilon})$
is jointly analytic in the} \\
&\text{variables $(\beta,h,\boldsymbol{\varepsilon})$ on the polydisc ${D}_{\beta_0}(\delta)\times{D}_{\tilde{h}}(\delta) \times \big({D}_0(\delta) \big)^n$,}
\end{split}
\end{equation}
where $D_z(r)= \{ z' \in \mathbb{C}: |z'-z| < r\}$ denotes the open disk of radius $r$ centered at $z$. For arbitrary $h_0 \in \mathbb{H}_+$, let $\overline{K} \subset \mathbb{H_+}$ 
be a closed rectangle containing both $h_0$ and $D_{\tilde{h}}(\delta)$ in its interior, and let $I = D_{\beta_0} ( \delta) \cap \mathbb{R}_+$ be the (real) diameter of the disk
$D_{\beta_0} ( \delta)$. The proof of Proposition \ref{analytic_corrs} yields that
\begin{equation} \label{MZ2}
\text{$\mathcal{R}_{ \infty} ( \beta, h,\boldsymbol{\varepsilon})$ is analytic in $(h,\boldsymbol{\varepsilon})$ on $K\times \big( D_0(\delta) \big)^n$, for every $\beta \in I$},
\end{equation}
provided $\delta >0$ is sufficiently small. Moreover, it follows from \eqref{Runifbounds} that
\begin{equation} \label{MZ3}
\sup_{ \beta \in I \ \!\! ; \ \!\! h \in K \ \!\! ; \ \!\! \varepsilon_\alpha \in  D_0(\delta)} \abs{\mathcal{R}_{ \infty} (\beta,h,\boldsymbol{\varepsilon}  )} < \infty.
\end{equation}
Together with \eqref{MZ1}, \eqref{MZ2} and \eqref{MZ3}, the Malgrange-Zerner theorem \cite[Theorem 2.2]{Sp1} (see also the lemma in Section III of \cite{LePe1}) then 
implies that $\mathcal{R}_{ \infty} ( \beta, h,\boldsymbol{\varepsilon})$ is jointly analytic for $\beta \in I$, $h \in K$, and 
$\varepsilon_\alpha \in D_0(\delta)$, $\alpha=1,\dots,n$ (i.e., in some neighborhood of this set in $\mathbb{C}^{2+n}$). In particular, 
$\mathcal{R}_{ \infty} ( \beta, h,\boldsymbol{\varepsilon})$ is jointly analytic in a polydisc around $(\beta_0,h_0,\boldsymbol{0})$, and we may further assume that 
$\mathcal{R}_{ \infty}$ is nowhere vanishing within this polydisc, by continuity (since $\mathcal{R}_{ \infty}(\beta_0,h_0,\boldsymbol{0}) \neq 0$, 
as shown in the proof of Proposition \ref{analytic_corrs}). Thus, $f_\infty = \log \mathcal{R}_{ \infty}$ is analytic on this polydisc, and so is
\begin{equation*}
\lim_{\Lambda \nearrow \mathbb{Z}^d} \frac{1}{|\Lambda|} \big\langle  \hat{\sigma}_{k_1}  ; \ \!\! \dots \  \!\! ;  \hat{\sigma}_{k_n} \big \rangle^c_{\Lambda, \beta, h} \bigg|_{\sum_{\alpha =1}^n k_\alpha = [0]},
\end{equation*}
by \eqref{mastereqn2}. Since $\beta_0 > 0$ and $h_0 \in \mathbb{H_+}$ were arbitrary, \eqref{joint_anal} now follows as before upon Fourier-transforming back to position space, which does not affect analyticity 
in $\beta$ and $h$.

\section{Applications} \label{apps}

In this section, we discuss various applications of Proposition \ref{analytic_corrs} and of Eq. (\ref{mastereqn2}). 

Our first application concerns the independence of our analyticity results of the choice of boundary conditions; (recall that, in the above proof, we have imposed \textit{periodic} boundary conditions). We wish to show that Proposition \ref{analytic_corrs} continues to hold for \textit{all} boundary conditions, $b$, for which the Lee-Yang theorem holds -- this includes, in particular, free boundary conditions -- and that the thermodynamic limit of correlations is unique for this class of boundary conditions, provided $h$ belongs to the Lee-Yang region $\mathbb{H}_+$. 

For the sake of simplicity, we sketch our arguments in the setting of Proposition \ref{analytic_corrs}, but the following conclusions continue to hold for any of the Lee-Yang models $(\Phi, \mu_0)$ mentioned above for which (an analogue of) Proposition \ref{analytic_corrs} has been shown to hold. Thus, let $H_\Lambda$ be the Ising Hamiltonian \eqref{His} with $h_x=h$, for all $x$, and pair couplings $J_{xy}$ satisfying \eqref{Jconds}, and let $\mu_0$ be an even measure with compact support satisfying condition \eqref{zerointsLYcond}.  Using a cluster expansion of the correlations at large magnetic fields, one proves (see for example \cite[Theorem V.7.11]{Si1}) that the model has a unique equilibrium state in a region $\Omega_\beta \subset \mathbb{H}_+$ defined by the constraints
\begin{equation*}
\mathrm{Re}(h) > h_0(\beta), \qquad \abs{\mathrm{Im}(h)} < \varepsilon (\beta),
\end{equation*}  
for some large $h_0 >0$ and some $\varepsilon > 0$ (both depending on $\beta$). Denoting by $\langle \ \cdot \ \rangle_{\Lambda, \beta, h, b}$ the (finite-volume) thermal average corresponding to a boundary condition $b$, it follows that 
\begin{equation}\label{bc}
\lim_{\Lambda \nearrow \mathbb{Z}^{d}} \frac {1}{\vert\Lambda\vert} \langle \hat{\sigma}_{k_1}  ;   \! \! \ \dots  ;  \! \! \ \hat{\sigma}_{k_n} \rangle^c_{\Lambda, \beta, h,b}
\end{equation}
exists and is independent of $b$, for any $h \in \Omega_\beta$. 
In particular, the limit agrees with
\begin{equation}\label{per.bc}
\lim_{\Lambda\nearrow\mathbb{Z}^d} \frac{1}{\vert\Lambda\vert} \langle \hat{\sigma}_{k_1}  ;  \! \! \ \dots \!\! \ ;  \hat{\sigma}_{k_n} \rangle^c_{\Lambda, \beta, h},
\end{equation}
for $h \in \Omega_\beta$, where, in the latter correlations, periodic boundary conditions are imposed, as in Section 3. Note that the correlation in (\ref{per.bc}) vanishes unless $\sum_{\alpha = 1}^{n} k_{\alpha} = [0]$ and is finite if the latter condition holds. Assuming that the boundary condition $b$ is such that the Lee-Yang theorem holds for $h\in\mathbb{H}_{+}$, we may use Eqs.
(\ref{mastereqn}) and (\ref{mastereqn2}) and then apply the Lee-Yang theorem to the partition function and the free energy of the model with boundary condition $b$ imposed at $\partial \Lambda$ to show that the correlations  
$\frac {1}{\vert\Lambda\vert} \langle \hat{\sigma}_{k_1}  ;   \! \! \ \dots  ;  \! \! \ \hat{\sigma}_{k_n} \rangle^c_{\Lambda, \beta, h,b}$
are uniformly bounded and analytic in $h$ on $\mathbb{H}_+$. 
Thus, multiplying 
$\frac {1}{\vert\Lambda\vert} \langle \hat{\sigma}_{k_1}  ;   \! \! \ \dots  ;  \! \! \ \hat{\sigma}_{k_n} \rangle^c_{\Lambda, \beta, h,b}$ by $\exp(-i\sum_{\alpha = 1}^{n} k_{\alpha}\cdot x_{\alpha})$ and integrating over the surface defined by the equation $\sum_{\alpha =1}^{n} k_{\alpha} = [0]$, we find, using that the integration domain is compact, that 
\begin{equation} \label{bTDlimit}
 \lim_{\Lambda \nearrow \mathbb{Z}^{d}} \langle \sigma_{x_1}  ;   \! \! \ \dots  ;  \! \! \ \sigma_{x_n} \rangle^c_{\Lambda, \beta, h,b} = \langle \sigma_{x_1}  ;  \! \! \ \dots \!\! \ ;  \sigma_{x_n} \rangle^c_{ \beta, h}, \qquad \text{for all } h \in \mathbb{H}_+,
\end{equation}
which proves analyticity of the thermodynamic limit of correlations with boundary condition $b$ on the \textit{entire} half-plane $\mathbb{H}_+$. 

\bigskip

\noindent Next, we consider the magnetization $\langle \sigma_x \rangle_{\beta,h}$, which, by Proposition \ref{analytic_corrs}, is an analytic function of $h$ in $ \mathbb{H}_+$, for any $\beta > 0$. Since $\beta$ is fixed in the sequel, we omit it from our notation. We propose to show that 
\begin{equation} \label{magnetization_properties}
\text{$\langle \sigma_x \rangle_{h}$ is a strictly positive, increasing, concave function of $h >0$.}
\end{equation}
Note that this yields a well-known bound on a critical exponent for the magnetization as a function of $h$, (with $\beta = \beta_{c}$, the critical inverse temperature).
As a preliminary step, we show that $\langle \sigma_{0}  ;  \sigma_{x} \rangle^c_{h}$ is decreasing for $h>0$. We recall that
\begin{equation} \label{useful_identity}
\frac{\partial}{\partial h} \langle \sigma_{x_1}  ;  \! \! \ \dots \!\! \ ;  \sigma_{x_n} \rangle^c_{\Lambda, h} = \sum_{z \in \Lambda}\langle \sigma_{x_1}  ;  \! \! \ \dots \!\! \ ;  \sigma_{x_n}; \sigma_{z} \rangle^c_{\Lambda, h},
\end{equation}
Applying this identity for $n=2$ and using the GHS-inequality (see  \cite{GHS1} for Ising spins, \cite{SiGr1} for more general $\mu_0$), we see that 
\begin{equation*}
\frac{\partial}{\partial h} \langle \sigma_{0}  ;  \sigma_{x} \rangle^c_{\Lambda, h} = \sum_{z \in \Lambda} \langle \sigma_{0}  ;  \sigma_{x}; \sigma_z \rangle^c_{\Lambda, h} \leq 0,
\end{equation*}
for all $h \geq 0$. Letting $\Lambda \nearrow \mathbb{Z}^d$, one obtains that $\frac{\partial}{\partial h} \langle \sigma_{0}  ;  \sigma_{x} \rangle^c_{ h} \leq 0$, for all $h >0$. Indeed, this follows from Cauchy's integral formula, using that $ \langle \sigma_{0}  ;  \sigma_{x}\rangle^c_{\Lambda, h}$ tends towards its infinite-volume limit uniformly on compact subsets of $\mathbb{H_+}$. (The latter claim follows from the proof of Proposition \ref{analytic_corrs}: invoking Vitali's theorem, it suffices to establish uniform bounds, $\sup_{\Lambda; h} \abs{ \langle \sigma_{0}  ;  \sigma_{x}\rangle^c_{\Lambda, h}} < \infty$, for $h$ belonging to an arbitrary compact subset $K$ of $ \mathbb{H}_+$, which, in turn, follow immediately from \eqref{mastereqn2}, \eqref{Runifbounds}, and the fact that  $\langle \sigma_{0}  ;  \sigma_{x} \rangle^c_{\Lambda, h} = |\Lambda|^{-2} \sum_{k \in \Lambda^*} e^{ik \cdot x}  \langle \hat{\sigma}_{k}  ;   \hat{\sigma}_{-k}  \rangle^c_{\Lambda, h}$.) 

Returning to \eqref{magnetization_properties}, the identity \eqref{useful_identity} and the GHS-inequality imply that $\partial^2 \langle \sigma_x \rangle_{h} / \partial h^2 \leq 0$, for $h >0$. Similarly, monotonicity of $\langle \sigma_x \rangle_{h}$, for $h >0$, follows from \eqref{useful_identity} and the FKG-inequality (which holds since the interactions are ferromagnetic).  It remains to show that $\langle \sigma_{x} \rangle_{ h}$ is positive for $h >0$. Indeed, $\langle \sigma_{x} \rangle_{ h} \geq 0$, for all $h>0$, follows from Griffiths' inequality. If $\langle \sigma_{x} \rangle_{ h}$ vanished, for some $h>0$, then $\langle \sigma_{x} \rangle_{ h'} = 0$, for all $0 < h' \leq h$, by monotonicity, which is impossible, because the zeroes of $\langle \sigma_{x} \rangle_{ h}$ form a discrete subset of $\mathbb{H}_+$. Similarly, the first derivative of $\langle \sigma_{x}\rangle_{h}$ in $h$ is strictly positive and the second derivative has, at most, a discrete set of zeros, for $h>0$. (Note that, since $\langle \sigma_{x} \rangle_{ -h}= - \langle \sigma_{x} \rangle_{ h}$, one derives corresponding properties of the function $\langle \sigma_{x} \rangle_{ h}$, $h < 0$, from \eqref{magnetization_properties}.)

\bigskip

\noindent Next, we consider the \textit{mass gap}, $m(\beta,h)$ (inverse correlation length), defined as 
\begin{equation} \label{mass_gap}
 m (\beta,h) = \frac{1}{\xi(\beta,h)} =- \limsup_{|x| \to \infty} \frac{1}{\abs{x}} \log\big| \langle \sigma_{0}  ;  \sigma_{x} \rangle^c_{\beta, h} \big|.
 \end{equation}
We restrict our attention to the generalized Ising model considered in Proposition \ref{analytic_corrs} with \textit{finite-range} interactions (i.e., $J_{xy} = 0$ whenever $|x-y| \geq R$, for some $R \geq 1$), but the following discussion applies to more general spin models obeying a Lee-Yang Theorem and to Euclidean $\lambda \phi^{4}$-field theories with non-zero ``external field'' $h$, in two and three space-time dimensions; see \cite{LePe1}, \cite{LePe2}, and \cite{GRS}. The mass gap satisfies
\begin{equation} \label{positivity_mass_gap}
 m (\beta,h) > 0, \qquad \text{for all $h \in \mathbb{H}_+$ and $ \beta >0$},
\end{equation}
i.e., the two-point function $ \langle \sigma_{0}  ;  \sigma_{x} \rangle^c_{\beta, h}$ exhibits exponential clustering. The inequality \eqref{positivity_mass_gap} can be shown as follows. By Proposition \ref{analytic_corrs}, $\langle \sigma_{0}  ;  \sigma_{x} \rangle^c_{h}$ is analytic in $ \mathbb{H}_+$, hence $ \log \big| \langle \sigma_{0}  ;  \sigma_{x} \rangle^c_{h} \big|$ is a $subharmonic$ function of $h \in \mathbb{H}_+$, see \cite[Theorem A.3]{GRS}. Using a cluster expansion, one shows that $m$ is positive on an open subset of $\mathbb{H_+}$ corresponding to sufficiently large $\mathrm{Re} \ \!  h$; (actually, positivity on some smooth arc in $\mathbb{H}_+$ would suffice). It then follows from subharmonicity (see \cite[Lemmas 1 and 5]{LePe2}) that $m$ is positive everywhere on $\mathbb{H_+}$.

Next, we remark that
\begin{equation}\label{mass_gap_incr}
 \text{$m(\beta,h)$ is an increasing function of $h>0$, for all $\beta>0$.}
 \end{equation}
This follows from definition \eqref{mass_gap}, using monotonicity of the logarithm.  For, we have already shown that the two-point function $\langle \sigma_{0}  ;  \sigma_{x} \rangle^c_{\beta, h}$ is decreasing in $h$, for $h>0$. 
We also wish to recall a bound on the critical exponent $\delta$ describing the divergence of the correlation length $\xi = m^{-1}$, as $h \searrow 0$, at the critical inverse temperature $\beta_c$; (see \cite{GRS} for a more general result that extends to Euclidean $\lambda \phi^{4}$-field theory):
 \begin{equation} \label{crit_exp}
 \xi(\beta_{c},h) \sim h^{-\delta}, \ \text{with } \delta \leq 1.
 \end{equation}
This follows by showing that $m(\beta_c, h) \geq c \cdot h$, for some positive constant $c$ and all sufficiently small $h>0$. The latter is a consequence of Theorem A.6 in \cite{GRS}, using the fact that the functions $-  \frac{1}{\abs{x}} \log\big| \langle \sigma_{0}  ;  \sigma_{x} \rangle^c_{\beta_c, h} \big|$ are superharmonic in $\mathbb{H}_+$ and that, given any (real) $h_0>0$, these functions are bounded away from $0$ by a positive constant (uniform in $h$ and $x$), for all $h \geq h_0$ and all sufficiently large $x$, which follows from \eqref{positivity_mass_gap} and \eqref{mass_gap_incr}.

\bigskip

Finally, we mention a generalization (alluded to, above) of Proposition \ref{analytic_corrs} to some $N$-component ($N=1,2,3$) Euclidean $\lambda \vert\boldsymbol{\phi}\vert^{4}_d$-field theories ($\boldsymbol{\phi} = (\phi^1, \cdots, \phi^N)$) in $d=2,3$ space-time dimensions, with periodic boundary conditions. The correlation functions of lattice spin systems are replaced by the Schwinger (Euclidean Green) functions,
\begin{equation*}
S_{L,a,h} (x_1,\alpha_1, \dots, x_n, \alpha_n), 
\end{equation*}
 of the properly renormalized lattice field theory on a lattice $\mathbb{Z}_{a}^{d}\cap \Lambda_{L}$, with periodic boundary conditions imposed at $\partial \Lambda_L$, where $d=2,3$, $a>0$ is the lattice spacing, $\Lambda_{L} = [-L/2,L/2]^d$ is a cube in $\mathbb{Z}^{d}_{a}$ with sides containing $\frac{L}{a}$ sites, and the arguments $x_i,\alpha_i$, $i=1,\dots,n,$ stand for the field components $\phi^{\alpha_i}(x_i)$. 
Formally, the Schwinger functions are given by
 \begin{equation*}
S_{L,a, h} (x_1, \alpha_1,\dots, x_n,\alpha_n) = 
\frac{\int \phi^{\alpha_1(x_1)} \cdots \phi^{\alpha_n}(x_n)e^{-A_{L,a,h}(\boldsymbol{\phi})} D\boldsymbol{\phi}_{L,a}}{  \int e^{-A_{L,a,h}(\boldsymbol{\phi})} D\boldsymbol{\phi}_{L,a}}, 
\end{equation*}
where $A_{L,a,h}$ is the Euclidean action of the theory, which is identical to the Hamilton function with periodic boundary conditions of the corresponding classical lattice spin system (with nearest-neighbor couplings, and in an external magnetic field $h$), but with coupling constants that depend on the lattice spacing $a$ in such a way that the continuum limit $a \rightarrow 0$ exists; see, e.g., \cite{Si3}, \cite{Park}.

Combining results in \cite{Si3} and \cite{Sp1}, for $d=2$, and in \cite{Park} and \cite{Feld-Ost}, \cite{Sp1}, for $d=3$, one can prove an analogue of Proposition \ref{analytic_corrs}, for $h\in\mathbb{H}_+$. This is accomplished by first proving existence and Euclidean invariance of the limits
\begin{equation}\label{limits}
\lim_{L\nearrow \infty}  \lim_{a\searrow 0} S_{L,a, h} (x_1, \alpha_1,\dots, x_n,\alpha_n),
\end{equation}
for $\mathrm{Re} \ \! h$ large enough and $\mathrm{Im} \ \! h$ small enough. The Lee-Yang theorem (see \cite[Theorem 6]{SiGr1}) and uniform bounds on the analogue of the free energy, see (\ref{fdef}), discussed in \cite{Si3} ($d=2$) and in \cite{GlJa} ($d=3$) then imply analogues of Eqs. (\ref{Runifbounds}) and (\ref{mastereqn2}) that can be used to prove bounds on the Schwinger functions, integrated against test functions on momentum space, that are uniform in $a$ and $L$ and yield analyticity in $h$ on $\mathbb{H}_+$. As a consequence, the limiting Schwinger functions in (\ref{limits}) exist, are Euclidean invariant and analytic in $h$, for $h\in\mathbb{H}_+$. A more detailed discussion of these arguments goes beyond the scope of this note.

The results mentioned here are of interest in an analysis of phase transitions accompanied by spontaneous symmetry breaking in $\lambda \vert \boldsymbol{\phi}\vert ^{4}-$ theory in $d=3$ space-time dimensions; see \cite[Section 4]{FSS}.\\

\medskip

\noindent \textit{Acknowledgements}. We thank David Ruelle for informing us about ref. \cite{LeRu1} prior to publication and for several helpful discussions. We are grateful to Barry Simon for helping us to trace some references. The senior author thanks Elliott Lieb for all he has taught him and for his friendship.

\end{document}